\newtheorem{theorem}{Theorem}[section]
\newtheorem{proposition}[theorem]{Proposition}
\theoremstyle {definition}
\newtheorem{example}[theorem]{Example}
\newtheorem{remark}[theorem]{Remark}
\numberwithin{equation}{section}
\newcommand{\Z}{\mathbb Z}
\newcommand{\Q}{\mathbb Q}
\newcommand{\N}{\mathbb N}
\renewcommand{\:}{\colon}
\renewcommand{\>}{\rightarrow}
\title{Secure cloud computations: \\
Description of (fully)homomorphic ciphers within the P-adic model of encryption}
\author{ Andrei Khrennikov and Ekaterina Yurova\\ International Center for Mathematical Modelling\\
in Physics and Cognitive Sciences\\ 
Linnaeus University, V\"axj\"o, S-35195, Sweden\\
 Andrei.Khrennikov@lnu.se}
\begin{document}
\maketitle

\begin{abstract}
In this paper we consider the description of homomorphic and fully  homomorphic ciphers in the $p$-adic model of encryption. This model describes a wide class of ciphers, but certainly not all. Homomorphic and fully homomorphic ciphers are used to ensure the credibility of remote computing, including cloud technology. The model describes all homomorphic ciphers with respect to arithmetic and coordinate-wise logical operations in the ring of $p$-adic integers $\Z_p$. We show that there are no fully homomorphic ciphers for each pair of the considered set of arithmetic and coordinate-wise logical operations on $\Z_p$. We formulate the problem of constructing a fully homomorphic cipher as follows. We consider a homomorphic cipher  with respect to operation "$*$" on $\Z_p$. Then, we describe the complete set of operations "$G$", for which the cipher is homomorphic. As a result, we construct a fully homomorphic cipher with respect to the operations  "$*$" and "$G$". We give a description of all operations "$G$", for which we obtain fully homomorphic ciphers with respect to the operations "$+$" and "$G$" from 
the homomorphic cipher constructed with respect to the operation "$+$".  We also present examples of such "new" operations.
\end{abstract}

{\bf keywords:} p-adics, p-adic functions, homomorphic ciphers, fully  homomorphic ciphers, dynamical systems, 1-Lipschitz

\section{Introduction}
Cloud computing and storage solutions provide users and enterprises with various capabilities to store and process their data in third-party data centers, \cite{Cloud}.  Homomorphic encryption is a form of encryption that allows computations to be carried out on ciphertext, thus generating an encrypted result which, when decrypted, matches the result of operations performed on the plaintext. A cryptosystem that supports arbitrary computation on ciphertexts is known as fully homomorphic encryption (FHE), \cite{Wiki}. The existence of an efficient and fully homomorphic cryptosystem would have great practical implications in the outsourcing of private computations, for instance, in the context of cloud computing, \cite{CloudHom}.

A brief review of the known homomorphic encryption algorithms is presented in \cite{Obzor_Ind}. Examples of implemented fully homomorphic ciphers exist - see, for example,  \cite{Gentry_3}, \cite{Gentry_2}, \cite{Gentry_1}, \cite{Patent_hom}. \footnote{However, there are some drawbacks in these schemes. For example, in the scheme of C. Gentry, the size of the ciphertext and computational complexity for encryption and decryption grows exponentially depending on the number of operations on the plaintext. In  \cite{Patent_hom}, to compute the encrypted value of the product of plaintexts, it is necessary to use one of the plain texts explicitly.}

The general idea of homomorphic (and fully homomorphic) encryption is as follows (for example, \cite {Obz_appl_1}, \cite {Obz_appl_2}). Suppose we have a  set of data $M$. The operations $g_1 : M\times M \to M,$ $g_2 : M\times M \to M$ are defined on the set $M$ (for homomorphic encryption only one operation is required). It is necessary to find the value of an expression $W(d_1,\ldots,d_n)$, which is defined through the operations $g_1$ and $g_2$ on the data  $d_1,\ldots,d_n\in M.$ By analogy with the formulas of Boolean algebra, the expression $W$ can be considered as a formula in the basis $g_1$ and $g_2$ (for more details see section \ref{sec_Appl}). If the calculation of the formula $W$ is performed remotely (for example, using cloud services), then the user sends the data $d_1,\ldots,d_n$ to an untrusted environment (for example, to the cloud server). After that, the result of the computing returns to the user. In this case, the user's data become open. 

We understand a cipher as a family of bijective transformations of the set $M,$ where each transformation is identified by a certain parameter -- the encryption key. Suppose we have a cipher with the following property with respect to the operations $g_1,$ $g_2$. For every transformation of encryption $f_a,$ where $a$ is a key, the following relations  $f_a(g_1(x,y))=g_1(f_a(x),f_a(y))$ and $f_a(g_2(x,y))=g_2(f_a(x),f_a(y)),$ $x,y\in M$ hold. Then, $f_a(W(d_1,\ldots,d_n))=W(f_a(d_1),\ldots,f_a(d_n)).$ This means that the remote computations are performed on encrypted data $f_a(d_1),\ldots,f_a(d_n)$ and the result of calculations $W(d_1,\ldots,d_n)$ is obtained in encrypted form $f_a(W).$ That is, only the user has access to the data $d_1,\ldots,d_n$. In general, this approach provides complete trust in remote computing.

In this paper, we consider the description of all homomorphic and fully homomorphic ciphers for the $p$-adic model of encryption with respect to a given set of operations, namely, arithmetic ("$+$" and "$\cdot$") and coordinate-wise logical ("$\mathrm{XOR}$" and "$\mathrm{AND}$"), defined on the set of  $p$-adic integers $\Z_p$. This model involves the automata implementation of encryption functions on the set of words in the alphabet $\{0,1,\ldots,p-1\}$. 

We start our paper by recalling definitions that are related to the $p$-adic analysis, as well as introducing the necessary notations.

In section \ref{sec_model} we describe the $p$-adic model of encryption. Moreover, we show that the problem of description of homomorphic (fully homomorphic) ciphers is reduced to the description of the measure-preserving 1-Lipschitz functions $f : \Z_p\to \Z_p$, which defines a homomorphism with respect to a given operation (relatively, to a given pair of operations) on $\Z_p$.

In the $p$-adic model of encryption the Theorems \ref{t_gom_ariff} and \ref{t_gom_coord} describe all homomorphic ciphers with respect to arithmetic operations "$+$", "$\cdot$" and coordinate-wise logical operations "$\mathrm{XOR}$" and "$\mathrm{AND}$" on $\Z_p$. Based on these theorems,  Proposition \ref{Prop_full_hom} shows that a fully homomorphic cipher does not exist with respect to all possible pairwise combinations of the operations from the set $\{``+", ``\cdot", ``\mathrm{XOR}", ``\mathrm{AND}"\}$. In this regard, in section \ref{sec_full_hom_cipher} we discuss the following possibility of constructing a fully homomorphic cipher within the $p$-adic model of encryption. Let us choose a homomorphic cipher with respect to a given operation "$*$" (for example, we take the operation "$+$" as "$*$"). Then we set on $\Z_p$ the new operation $G : \Z_p\times\Z_p \to \Z_p$ such that a given cipher is homomorphic. Operation $G$ can be set as a convergent in the $p$-adic metric power series on $\Z_p$. In this case, the homomorphic cipher is a fully homomorphic cipher with respect to the operations "$*$" and "$G$". Of course, the new operation "$G$" will not have such properties as associativity, commutativity, etc. In Proposition \ref{new_operation} we give a description of all operations "$G$", for which we obtain fully homomorphic ciphers with respect to the operations "$+$" and "$G$" from the homomorphic cipher with respect to the operation "$+$". At the end of the section \ref{sec_full_hom_cipher}, we present examples of such "new" operations.

In section \ref {sec_Appl} we discuss the possibility of remote computing security by using a fully homomorphic encryption on particularly chosen operations for a given computational procedure (that is, the formulas implemented by a computer program).


\subsection{$P$-adic numbers}

For any prime number $p$ the p-adic norm $|\cdot|_p$ is defined on $\Q$ in the following way. For every nonzero integer $n$ let $ord_p(n)$ be the highest power of $p$ which divides $n$. Then we define $|n|_p=p^{-ord_p (n)}$, $|0|_p=0$ and $|\frac {n}{m}|_p=p^{-ord_p(n)+ord_p(m)}$.

The completion of $\Q$ with respect to the p-adic metric $\rho_p(x,y)=|x-y|_p$  is called the field of p-adic numbers $\Q_p$.
The metric $\rho_p$ satisfies the so-called strong triangle inequality $|x\pm y|_p\le \max{(|x|_p;|y|_p)}$. 
The set $\Z_p=\{x\in\Q_p\colon |x|_p\leq 1\}$ is called the set of p-adic integers. 

Hereinafter, we will consider only the $p$-adic integers.
Every $x\in \Z_p$ can be expanded in canonical form, namely, in the form of a series that converges for the $p$-adic norm:
$x=x_0+px_1+\ldots+p^kx_k+\ldots,\quad x_k\in\{0,1,\ldots,p-1\}, k\ge 0.$

Let $a\in \Z_p$ and $r$ be positive integers. The set $B_{p^{-r}}(a)=\{x\in\Z_p : |x-a|_p\le p^{-r}\}=a+p^r\Z_p$ is a ball of radius  $p^{-r}$ with center $a$.

\subsection {$P$-adic functions}
\label {subsec_Lip}

In this paper, we consider functions $f:\Z_p\> \Z_p$, which satisfy the Lipschitz condition with constant 1 (i.e. 1-Lipschitz functions).
Recall that 
$f:\Z_p\> \Z_p$ is a 1-Lipschitz function if
$$  
|f(x)-f(y)|_p\le |x-y|_p,\quad \mbox {for all}\quad  x,y\in\Z_p. 
$$
This condition is equivalent to the following: $x\equiv y \pmod {p^k}$ follows $f(x)\equiv f(y) \pmod {p^{\;k}}$ for all $k\ge 1$.

For all $k\ge 1$ a 1-Lipschitz transformation $f: \Z_p \rightarrow \Z_p$ of the reduced mapping modulo $p^k$ is  
\begin{equation}
\label {f_mod_p_k}
    f_{k-1}: \Z/p^k\Z \rightarrow Z/p^k\Z,\;\;z\mapsto f(z)\pmod {p^k}.
\end{equation}
Mapping $f_{k-1}$ is well defined (the $f_{k-1}$ does not depend on the choice of representative in the ball $z + p^k\Z_p$).

\subsubsection{Van der Put series}
$P$-adic functions can be represented in the form of the van der Put series.
The van der Put series is defined in the following way. Let $f\:\Z_p\>\Z_p$ be a continuous function. Then there exists a unique sequence of $p$-adic coefficients
$B_0,B_1,B_2, \ldots$ such that 
\begin{equation}
\label{vdp} 
f(x)=\sum_{m=0}^{\infty} B_m \chi(m,x)     
\end{equation}
for all $x \in \Z_p.$
Here the characteristic function $\chi(m,x)$ is given by
$$
\chi(m,x)=\begin{cases}
1,&\text{if $|x-m|_p\le p^{-n}$;}\\
0,&\text{otherwise,}
\end{cases}
$$
where $n=0$ if $m=0$, and $n$ is uniquely defined by the inequality 
$p^{n-1}\leq m \leq p^n-1$ otherwise (see Schikhof's book \cite{Schikhof} for detailed presentation of the theory of the van der Put series). 
The number $n$ in the definition of $\chi(m,x)$ has a very natural meaning. It is just the number of digits in a base-$p$ expansion of $m\in \N_0.$ Then
$$
\left\lfloor  \log_p m \right\rfloor 
=
\left(\text{the number of digits in a base-} p \;\text{expansion for} \;m\right)-1,
$$
therefore $n=\left\lfloor  \log_p m \right\rfloor+1$ for all $m\in\N_0$ and $\left\lfloor  \log_p 0 \right\rfloor=0$ (recall that $\left\lfloor \alpha \right\rfloor$ denotes the integral part of $\alpha$).  

The coefficients $B_m$ are related to the values of the function $f$ in the following way. Let $m=m_0+ \ldots +m_{n-2} p^{n-2}+m_{n-1} p^{n-1}$, 
$m_j\in \{0,\ldots ,p-1\}$, $j=0,1,\ldots,n-1$ and $m_{n-1}\neq 0,$ then

\begin{equation}
\label {koef_vdP}
B_m=\begin{cases}
    f(m)-f(m-m_{n-1} p^{n-1}), &\text{if $m\ge p$;}\\
                         f(m), &\text{otherwise}.
\end{cases}
\end{equation}

1-Lipschitz functions $f\: \Z_p \to \Z_p$ in terms of the van der Put series were described in \cite {Schikhof}. We follow Theorem 3.1 \cite {Tfunc} as a convenience for further study. In this theorem, the function $f$ presented via the van der Put series is 1-Lipschitz if and only if $|B_m|_p \leq p^{-\left\lfloor \log_p m \right\rfloor}$ for all $m \ge 0$. Assuming $B_m=p^{\left\lfloor \log_p m\right\rfloor}b_m$, we find that the function $f$ is 1-Lipschitz if and only if it can be represented as
\begin{equation}
\label {1-Lip}
    f(x)=\sum_{m=0}^\infty p^{\left\lfloor \log_p m\right\rfloor} b_m \chi (m,x)
\end{equation}
for suitable $b_m \in \Z_p,$ $m\ge 0.$

\subsubsection {Coordinate representation of 1-Lipschitz functions}
In this section we describe a coordinate representation of $p$-adic functions, see, for example, \cite{YuRecent}.

 We recall that any $p$-adic integer (as an element of the ring $\Z_p$) can be expanded into the series:
\begin{equation*}
x=x_0 +  p\;x_1 + \ldots+ p^k x_k +\ldots, \;\; x_j \in\{0,1,\ldots,p-1\}.
\end{equation*}
Let us denote 
\begin{align}
\label{_x}
    (x)_k&=x_0+x_1p+\ldots +x_{k-1}p^{k-1},\; k\ge 1, \notag \\
    [x]_k&=(x_0,x_1,\ldots ,x_{k-1}),\; k\ge 1.
\end{align}

Let functions $\delta_k (x), k=0,1,2,\ldots$ be $k$-th digit in a base-$p$ expansion of the number $x\in \Z_p,$ i.e. $\delta_k \: \Z_p\rightarrow \left\{0,1,\ldots,p-1\right\},\; \delta_k(x)=x_k.$

Any map $f: \Z_p \to \Z_p$ can be represented in the form:
\begin{equation}
\label{coor1} 
f(x)=\delta_0 (f(x))+ p\delta_1 (f(x))+ \ldots+ p^k \delta_k (f(x))+\ldots.    
\end{equation}

According to Proposition 3.33 in \cite {ANKH}, $f$ is a 1-Lipschitz function if and only if
for every $k \ge 1$ the k-th coordinate function $\delta_k (f (x))$ does not
depend on $\delta_{k+s}(x)$ for all $ s \ge 1 $, i.e. $\delta_k(f(x+p^{k+1}\Z_p))=\delta_k(f(x))$ for all $x\in \{0,1,\ldots,p^{k+1}-1\}$.

Taking into account notation \ref{_x} for $k\ge 0$, we consider the following functions of $p$-valued logic 
$$
\varphi_k\: \underbrace{\{0,\ldots,p-1\}\times\ldots\times\{0,\ldots,p-1\}}_{k+1} \> \{0,\ldots,p-1\},
$$ 
and $\varphi_k: [x]_{k+1} \mapsto \delta_k(f(x)).$ 

Then any 1-Lipschitz function $f\: \Z_p \> \Z_p$ can be represented as

\begin{equation}
\label{coor2}   
 f(x)=f(x_0+\ldots+p^kx_k+\ldots)=\sum_{k=0}^{\infty}p^k\varphi_k(x_0,\ldots,x_k)=
\sum_{k=0}^{\infty}p^k\varphi_k([x]_{k+1}).
\end{equation}

The function $\varphi_k(x_0,\ldots,x_k)$ can be defined by its sub-functions obtained by fixing the first $k$ variables   $(x_0,\ldots, x_{k-1})$.
By \ref {_x}, the sub-function of the function  $\varphi_k(x_0,\ldots,x_k)$ which is obtained by fixing the variables $x_0=a_0,\ldots,x_{k-1}=a_{k-1},$ $a_i\in\{0,\ldots,p-1\},$ is denoted by $\varphi_{k,[a]_k}$, where $a=a_0+pa_1+\ldots+p^{k-1}a_{k-1}$. In these notations, the function $\varphi_k(x_0,\ldots,x_k)$ can be represented as

\begin{equation}
    \varphi_k(x_0,\ldots,x_k)=\sum_{a=0}^{p^k-1}I_{[a]_k}(x_0,\ldots,x_{k-1})\varphi_{k,[a]_k}(x_k),
\end{equation}
where $I_{[a]_k}$ is a characteristic function, i.e.
$$
I_{[a]_k}(x_0,\ldots,x_{k-1})=\begin{cases}
1,&\text{if $(x_0,\ldots,x_{k-1})=[a]_k$;}\\
0,&\text{otherwise.}
\end{cases}.
$$

Thus, we can rewrite the 1-Lipschitz function $f\: \Z_p \> \Z_p$ as
\begin{equation}
\label{coor3} 
    f(x)=\sum_{k=0}^{\infty}p^k\varphi_k(x_0,\ldots,x_k)=\varphi_0+\sum_{k=1}^{\infty} p^k\sum_{a=0}^{p^k-1}I_{[a]_k}([x]_{k-1})\varphi_{k,[a]_k}(x_k).
\end{equation}

 We call the relation \ref {coor3} the coordinate representation of 1-Lipschitz function $f$.

Any function $ \varphi_k $ can be given by a polynomial of the ring of $(k+1)$-variate polynomials  $(\Z/p\Z)[x_0,\ldots,x_k]$, with coefficients from the residue ring $\Z/p\Z$, whose degree in each variable is at most $p-1$. In other words, $ \varphi_k $ is defined as an element of a factor-ring $(\Z/p\Z)[x_0,\ldots,x_k]$ modulo an ideal generated by all polynomials 
$x_i^p-x_i,\; i=0,1,\ldots,k$.

The function $ \varphi_k $  can be represented in the form (expansion by the leading variable  $x_k$):
$$
\varphi_k(x_0,\ldots,x_k)=x_k^{p-1}\alpha_{p-1}(x_0,\ldots,x_{k-1})
+\ldots+\alpha_{0}(x_0,\ldots,x_{k-1}),
$$
where $\alpha_{i}(x_0,\ldots,x_{k-1})\in (\Z/p\Z)[x_0,\ldots,x_{k-1}],\; i=0,1,\ldots,p-1$. In particular, 
$$
\varphi_{k,[a]_k}=x_k^{p-1}\alpha_{p-1}([a]_k)+\ldots+\alpha_{0}([a]_k)
$$
is a polynomial from  $(\Z/p\Z)[x_k]$ ($\alpha_{i}([a]_k)\in \Z/p\Z,\; i=0,1,...,p-1$).

Thus, the coordinate functions $\varphi_{k,[a]_k}$, $\varphi_0$ can be considered as a function of $p$-valued logic and as a transformation of the ring $\Z/p\Z$.


\subsection{$P$-adic dynamics}

Dynamical system theory studies trajectories (orbits), i.e.
sequences of  iterations:
$$x_0,\ x_1=f(x_0),\ldots,x_{i+1}=f(x_i)=f^{(i+1)}(x_0),\ldots ,$$
where $f^{(s)}(x)=\underbrace{f(f(\ldots f(x))\ldots)}_{s}$.

We consider a $p$-adic autonomous dynamical system $\left\langle \Z_p,\mu_p ,f \right\rangle;$ for more details see, for example, \cite {Unif0}, \cite {Unif}, \cite{Erg}, \cite{ErgodHaos}, \cite{MeraJNT}. 
The space $\Z_p$ is equipped with a natural probability measure, namely,
the Haar measure $\mu_p$ normalized so that $\mu_p(\Z_p) = 1.$
Balls $B_{p^{-r}} (a)$ of nonzero
radii constitute the base of the corresponding $\sigma$-algebra of measurable
subsets, $\mu_p(B_{p^{-r}} (a))= p^{-r}.$ The function $f:\Z_p \to \Z_p$ is continuous on $\Z_p.$

A measurable mapping $f\:\Z_p\>\Z_p$ is called measure-preserving if $$\mu_p(f^{-1}(U)) = \mu_p(U)$$ for each measurable subset $U \subset \Z_p.$ 

Criteria of measure-preservation for 1-Lipschitz functions are presented in the following theorems.

\begin{theorem}[V. Anashin, \cite {Unif0}, \cite {Unif} or \cite {ANKH}]
\label {cr_mer_Anashin}
A 1-Lipschitz functions $f:\Z_p \to \Z_p$ preserves the measure if and only if $f_k=f\pmod{p^k}$ is bijective on $\Z/p^k\Z$ for any $k=1,2,\ldots.$
\end{theorem}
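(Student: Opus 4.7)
The plan is to exploit the fact that the Haar measure $\mu_p$ on $\Z_p$ is completely determined by its values on the clopen balls $B_{p^{-k}}(a)=a+p^k\Z_p$, each of measure $p^{-k}$, and that the 1-Lipschitz property rigidly ties the reductions $f_k=f\bmod p^k$ to the way $f$ moves these balls. The key observation, which I would isolate at the outset, is the following: by the equivalent formulation of 1-Lipschitz recalled in Subsection \ref{subsec_Lip}, $x\equiv y\pmod{p^k}$ implies $f(x)\equiv f(y)\pmod{p^k}$, so the preimage $f^{-1}(B_{p^{-k}}(a))$ is always a (possibly empty) union of entire balls of radius $p^{-k}$.

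For the forward direction, assume $f$ preserves $\mu_p$. By the observation above, $f^{-1}(B_{p^{-k}}(a))$ is a disjoint union of, say, $N_k(a)$ balls of radius $p^{-k}$, so its measure equals $N_k(a)\cdot p^{-k}$. Measure preservation forces $N_k(a)\cdot p^{-k}=p^{-k}$, hence $N_k(a)=1$ for every $a$ and every $k$. Translating back to residues, each class in $\Z/p^k\Z$ has exactly one preimage class under $f_k$, i.e. $f_k$ is a bijection.

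For the converse, assume each $f_k$ is bijective on $\Z/p^k\Z$. Fix a ball $B_{p^{-k}}(a)$; the condition $f(x)\equiv a\pmod{p^k}$ admits, by bijectivity, a unique solution $b\in\Z/p^k\Z$. Combined with the 1-Lipschitz property this gives the exact identity $f^{-1}(B_{p^{-k}}(a))=B_{p^{-k}}(b)$, whence $\mu_p(f^{-1}(B_{p^{-k}}(a)))=p^{-k}=\mu_p(B_{p^{-k}}(a))$. The collection of balls is a $\pi$-system generating the Borel $\sigma$-algebra on $\Z_p$, so by the standard uniqueness-of-extension result for finite measures agreeing on a generating $\pi$-system, the equality $\mu_p(f^{-1}(U))=\mu_p(U)$ extends from balls to all Borel sets.

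The conceptual content is really packaged into the single line that, for a 1-Lipschitz map, the preimage of a ball of radius $p^{-k}$ is automatically a union of balls of the same radius; once this is in hand, both directions reduce to counting residue classes. The only mild obstacle is the final measure-theoretic extension step, and this is routine since $\mu_p$ is a finite measure on a compact space generated by a countable algebra of clopen balls.
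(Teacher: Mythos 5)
The paper does not prove Theorem \ref{cr_mer_Anashin} at all: it is quoted as a known result of V.~Anashin with references to \cite{Unif0}, \cite{Unif}, \cite{ANKH}, so there is no internal proof to compare against. Your argument is correct and is essentially the standard one found in those sources: the 1-Lipschitz condition forces $f$ to map each ball $a+p^k\Z_p$ into a single ball of the same radius, so $f^{-1}(a+p^k\Z_p)$ is a disjoint union of $N_k(a)$ such balls of measure $p^{-k}$ each; measure preservation is then exactly the statement $N_k(a)=1$ for all $a$ and $k$, which is bijectivity of $f_k$. The converse step via the $\pi$-system of balls (which in the ultrametric setting is closed under intersection up to the empty set) and uniqueness of extension of finite measures is also sound, since both $\mu_p$ and $U\mapsto\mu_p(f^{-1}(U))$ are probability measures agreeing on all balls. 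No gaps.
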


\begin{theorem}[Theorem 2.1, \cite {MeraJNT}]
\label {cr_mer_vdp}
A 1-Lipschitz function $f:\Z_p\>\Z_p$ represented by the van der Put series \ref{1-Lip}  preserves the measure if and only if 
\begin{enumerate}
\item $\{b_0,b_1,\ldots, b_{p-1}\}$ establish a complete set of residues modulo $p$ \\ (i.e. $f(x)$ is bijective modulo $p$);
\item the elements in the set $\{b_{m+p^k}, b_{m+2p^k},\ldots ,b_{m+(p-1)p^k}\}$ are all nonzero residues modulo $p$ for any $m=0,\ldots, p^k-1,$ $k\ge 2.$
\end{enumerate}
\end{theorem}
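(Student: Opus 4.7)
The plan is to invoke Anashin's criterion (Theorem \ref{cr_mer_Anashin}), which reduces measure preservation to the bijectivity of each truncation $f_{k-1}\colon \Z/p^k\Z \to \Z/p^k\Z$, and then to translate that bijectivity, level by level, into an algebraic condition on the van der Put coefficients $b_m$ of the expansion \eqref{1-Lip}.

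The heart of the proof is the identity
$$f(m + jp^k) \equiv f(m) + p^k b_{m+jp^k} \pmod{p^{k+1}}, \qquad 0 \le m < p^k,\ 1 \le j \le p-1.$$
To establish it, fix $x = m + jp^k$ and partition the terms of \eqref{1-Lip} according to the number of base-$p$ digits $s$ of $n$. Terms with $n \ge p^{k+1}$ are divisible by $p^{k+1}$ (since $B_n = p^{\lfloor \log_p n \rfloor} b_n$) and drop out modulo $p^{k+1}$. Terms with $s \le k$ depend only on the first $k$ digits of the argument, which $m$ and $m + jp^k$ share, so their joint contribution is the same at both arguments. Finally, among $n$ with $p^k \le n < p^{k+1}$ write $n = m' + j'p^k$ with $0 \le m' < p^k$ and $1 \le j' \le p-1$; the requirement $\chi(n, m + jp^k) = 1$ forces matching of the first $k+1$ digits, and hence $m' = m$, $j' = j$, so only the single term $B_{m+jp^k} = p^k b_{m+jp^k}$ survives. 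At $x = m$ no such $(k+1)$-digit term contributes, since $m$'s $(k+1)$-th digit is zero while $n$'s is nonzero by construction. Subtracting yields the identity.

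The induction on $k$ is now routine. At the base level $k = 0$, the values $f(0), \ldots, f(p-1)$ reduced modulo $p$ are determined by $b_0, \ldots, b_{p-1}$, and bijectivity of $f \bmod p$ translates directly into condition~(1). For the inductive step, assume $f_{k-1}$ is bijective on $\Z/p^k\Z$. By 1-Lipschitzness, the fiber of the projection $\Z/p^{k+1}\Z \to \Z/p^k\Z$ above $f(m) \bmod p^k$ must be filled bijectively by the $p$ lifts $\{f(m + jp^k) \bmod p^{k+1} : 0 \le j \le p-1\}$. The displayed identity writes these as $f(m)$ together with $f(m) + p^k b_{m + jp^k}$, $j = 1, \ldots, p-1$, and their distinctness modulo $p^{k+1}$ is equivalent to $\{0, b_{m+p^k}, \ldots, b_{m + (p-1)p^k}\}$ being a complete residue system modulo $p$ — that is, to $\{b_{m + jp^k}\}_{j=1}^{p-1}$ forming a complete set of nonzero residues mod $p$, which is exactly condition~(2) at level $k$.

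The main obstacle is the digit-level bookkeeping in the key identity: one must check that among all $n$ with $\chi(n, m + jp^k) \ne 0$, exactly one contributes a term of $p$-adic valuation $k$ (namely $n = m + jp^k$ itself), while the lower-order terms coincide with those contributing to $f(m)$. Once this identity is in hand, Anashin's criterion packages the inductive bijectivity requirements precisely as conditions~(1) and~(2) of the theorem.
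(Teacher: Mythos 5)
The paper gives no proof of this statement---it is quoted verbatim from \cite{MeraJNT}---but your argument is correct and is essentially the proof given in that source: Anashin's criterion (Theorem \ref{cr_mer_Anashin}) reduces measure preservation to bijectivity of every reduction $f_k$ modulo $p^{k+1}$, and your key identity $f(m+jp^k)-f(m)=B_{m+jp^k}=p^{k}b_{m+jp^k}$ (which is just the coefficient formula \ref{koef_vdP} specialized to $n=k+1$) converts bijectivity at level $k+1$, given bijectivity at level $k$, into the requirement that $\{b_{m+jp^k}\}_{j=1}^{p-1}$ exhaust the nonzero residues modulo $p$. One remark: your induction correctly yields condition (2) for every $k\ge 1$, whereas the statement as printed restricts it to $k\ge 2$; the printed range is a misprint (with no constraint on $b_p,\ldots,b_{p^2-1}$ the criterion would fail already modulo $p^2$), so your derivation, not the displayed bound, is the right one.
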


\begin{theorem}[Theorem 2.1  \cite {ErgodHaos}]
\label {cr_mer_coord}
A 1-Lipschitz function $f:\Z_p\>\Z_p$ represented in the coordinate form \ref {coor3} preserves the measure if and only if all functions $\varphi_0$ and $\varphi_{k,[a]_k},$ $a\in \{0,1\ldots,p^k-1\},$ $k\ge 1$ are bijective on $\{0,\ldots,p-1\}$.
\end{theorem}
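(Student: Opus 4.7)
The plan is to reduce the statement to Anashin's criterion (Theorem \ref{cr_mer_Anashin}) and then unpack the bijectivity of each reduction $f_k$ using the coordinate representation \ref{coor3}. Since $f$ is $1$-Lipschitz, the reduced map $f_k\colon \Z/p^{k+1}\Z\to \Z/p^{k+1}\Z$ is well defined, and from the expansion
$$
f(x)=\varphi_0+p\varphi_1(x_0,x_1)+\ldots+p^k\varphi_k(x_0,\ldots,x_k)+\ldots
$$
we read off that the $j$-th $p$-adic digit of $f_k(x)$ equals $\varphi_j(x_0,\ldots,x_j)$. Thus, viewed as a transformation of $(\Z/p\Z)^{k+1}$, the map $f_k$ has lower-triangular form: its first $j+1$ output digits depend only on the first $j+1$ input digits.

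I would then prove the equivalence by induction on $k$. The base case $k=0$ is immediate: $f_0(x)\equiv \varphi_0\pmod p$, so $f_0$ is bijective on $\Z/p\Z$ iff $\varphi_0$ is a permutation of $\{0,\ldots,p-1\}$. For the inductive step, assume the equivalence holds for $f_{k-1}$ on $\Z/p^k\Z$, and partition $\Z/p^{k+1}\Z$ into the $p^k$ fibers $F_{[a]_k}=\{x:[x]_{k}=[a]_k\}$, each of size $p$. On $F_{[a]_k}$ the first $k$ digits of $f_k(x)$ form the constant word $[f_{k-1}((a)_k)]_k$, while the $k$-th digit is $\varphi_{k,[a]_k}(x_k)$. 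Hence $f_k$ sends each fiber $F_{[a]_k}$ into the single codomain-fiber indexed by $[f_{k-1}((a)_k)]_k$, and does so bijectively iff $\varphi_{k,[a]_k}$ is a permutation of $\{0,\ldots,p-1\}$. By the induction hypothesis the fiber-to-fiber assignment $[a]_k\mapsto[f_{k-1}((a)_k)]_k$ is itself a bijection of $(\Z/p\Z)^k$, so $f_k$ is bijective on $\Z/p^{k+1}\Z$ iff $\varphi_0$ and all $\varphi_{j,[a]_j}$ with $1\le j\le k-1$ are permutations \emph{and} $\varphi_{k,[a]_k}$ is a permutation for every $a\in\{0,\ldots,p^k-1\}$. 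Letting $k\to\infty$ and invoking Theorem \ref{cr_mer_Anashin} gives the claim.

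The main obstacle I expect is the bookkeeping in the inductive step, specifically justifying that bijectivity of $\varphi_{k,[a]_k}$ must be required for \emph{every} $[a]_k$ rather than only those whose image appears in the codomain. This uses the induction hypothesis in an essential way: because $f_{k-1}$ is already a bijection, the assignment $[a]_k\mapsto[f_{k-1}((a)_k)]_k$ exhausts all $p^k$ residue words, so a failure of $\varphi_{k,[a]_k}$ for even a single $[a]_k$ destroys global bijectivity on $\Z/p^{k+1}\Z$. Once this point is carefully handled, both directions of the equivalence follow symmetrically from the triangular structure.
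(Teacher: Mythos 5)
Your argument is correct: the paper only quotes this as Theorem 2.1 of \cite{ErgodHaos} without reproducing a proof, and your reduction to Theorem \ref{cr_mer_Anashin} via the triangular structure of $f_k$ on $(\Z/p\Z)^{k+1}$, with the fiber-by-fiber induction, is exactly the standard derivation used there. The one point worth making explicit in a write-up is that bijectivity of $f_k$ forces bijectivity of its reduction $f_{k-1}$ (surjectivity descends modulo $p^k$ on a finite set), which is what licenses your appeal to the induction hypothesis in the ``only if'' direction; you have in fact identified and handled this correctly.
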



\section{Model}
\label {sec_model}
We consider the following  automata model of encryption. Recall that $\mathfrak{A}=\langle \mathcal{T}, \mathcal{S}, \mathcal{L}, S, L\rangle$ is called an automaton if $\mathcal{T}$ is  a finite set of input symbols (input alphabet); $\mathcal{L}$ is a finite set of output symbols (output alphabet); $\mathcal{S}$ is a set of states (this set is not necessarily finite); $S : \mathcal{T}\times \mathcal{S} \to \mathcal{S}$ is a state
transition function; $L : \mathcal{T}\times \mathcal{S} \to \mathcal{L}$ is an output function. Automaton $\mathfrak{A}$ transforms the input words from the alphabet $\mathcal{T}$ into the output words from the alphabet $\mathcal{L}$ by the following rules:

\begin{enumerate}
    \item Suppose that at some moment of  discrete time  $k,\;k\ge 0,$  the automaton $\mathfrak{A}$ is in the state $s_k\in \mathcal{S}$;
    \item the symbol $t_k\in \mathcal{T}$ is fed to the input of the automaton $\mathfrak{A};$
    \item as a result, the automaton $\mathfrak{A}$ changes its state to $s_{k+1}=S(t_k,s_k)$;
    \item the symbol  $l_{k+1}=L(t_k,s_k)\in \mathcal{L}$ appears on the output of $\mathfrak{A}$.
\end{enumerate}     

Remember that the initial automaton $\mathfrak{A}_{s_0}=\langle \mathcal{T}, \mathcal{S}, \mathcal{L}, S, L, s_0\rangle$ is an automaton $\mathfrak{A};$ where one state $s_0\in \mathcal{S}$ is fixed, $s_0$ is called the initial state. 

Automaton $\mathfrak{A}_{s_0}$ transforms the complete set of words in the alphabet $\mathcal{T}$ into the complete set of words in the alphabet $\mathcal{L};$ see Figure 1.  This transformation we denote by $f_{\mathfrak{A}_{s_0}}$. 

 \begin{center}
\begin{picture}(250,60)
\put(0,35){\vector(1,0){90}}
\put(160,35){\vector(1,0){90}}
\put(0,40){$\ldots,t_k,\ldots ,t_1 ,t_0$}
\put(170,40){$l_0,l_1, \ldots ,l_k,\ldots$}
\thicklines
\put(90,20){\line(0,1){30}}
\put(90,50){\line(1,0){70}}
\put(90,20){\line(1,0){70}}
\put(160,20){\line(0,1){30}}
\put(120,32){$\mathfrak{A}_{s_0}$}
\put(105,0){Figure 1}
\end{picture}
\end{center}

In particular, $f_{\mathfrak{A}_{s_0}}$ maps a set of words of length $n\in \N$ in the alphabet $\mathcal{T}$ - $\mathcal{T}_n$ to the set of words of length $n$ in the alphabet $\mathcal{L}$ - $\mathcal{L}_n$. Such induced mapping is denoted by $f_{\mathfrak{A}_{s_0}, n}$ (i.e., $f_{\mathfrak{A}_{s_0}, n} : \mathcal{T}_n \to \mathcal{L}_n,$ $n\in\N$).

As the input and output alphabets $\mathcal{T}$ and $\mathcal{L},$ we consider the set $\{0,1,\ldots,p-1\}=\mathcal{T}=\mathcal{L}$, $p$ is a prime number. The complete set of words in the alphabet $\{0,1,\ldots,p-1\}$ can be considered as $\Z_p$ (i.e., the number $t_0+pt_1+\ldots+p^kt_k+\ldots$ is associated with the word $(t_0,t_1,\ldots,t_k,\ldots)$). In this case, $f_{\mathfrak{A}_{s_0}}$ is a map $\Z_p$ to itself. 

The paper \cite {An_avt_2} provides the following theorem (here we reformulate the theorem in a convenient form for further discussion).

\begin {theorem} [V. Anashin, Theorem 2.1 \cite {An_avt_2}]
The automaton function $f_{\mathfrak{A}_{s_0}} : \Z_p\to\Z_p$ of the automaton $\mathfrak{A}_{s_0}=\langle \mathcal{T}, \mathcal{S}, \mathcal{L}, S, L, s_0\rangle,$ $\mathcal{T}=\mathcal{L}=\{0,1\ldots,p-1\}$ is 1-Lipschitz.
Conversely, for every 1-Lipschitz function $f : \Z_p\to \Z_p$ there exists an automaton
$\mathfrak{A}_{s_0}=\langle \{0,1\ldots,p-1\}, \mathcal{S}, \{0,1\ldots,p-1\}, S, L, s_0\rangle$ such that $f=\mathfrak{A}_{s_0}$.
\end {theorem}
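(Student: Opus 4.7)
The two implications require rather different work; the forward one is essentially tautological, while the converse rests on the coordinate representation of 1-Lipschitz functions established earlier in the excerpt.

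\emph{Forward direction.} My plan is to unfold the recursion in the definition of the automaton. Since $s_{k+1}=S(t_k,s_k)$ and the initial state $s_0$ is fixed, an immediate induction on $k$ shows that $s_k$ is a function of the prefix $(t_0,\ldots,t_{k-1})$ alone, and therefore the output symbol produced at step $k$ is a function of $(t_0,\ldots,t_k)$ alone. Identifying a word $(t_0,t_1,\ldots)$ with the $p$-adic integer $x=\sum_{k\ge 0}p^k t_k$, this causal dependence says precisely that the first $k+1$ digits of $f_{\mathfrak{A}_{s_0}}(x)$ are determined by the first $k+1$ digits of $x$. Equivalently, $x\equiv y\pmod{p^{k+1}}$ implies $f_{\mathfrak{A}_{s_0}}(x)\equiv f_{\mathfrak{A}_{s_0}}(y)\pmod{p^{k+1}}$ for every $k\ge 0$, which is the 1-Lipschitz condition recalled in Subsection~\ref{subsec_Lip}.

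\emph{Converse.} Given a 1-Lipschitz $f\colon\Z_p\to\Z_p$, I would read off an automaton directly from the coordinate expansion $f(x)=\sum_{k\ge 0}p^k\varphi_k(x_0,\ldots,x_k)$ of \ref{coor2}. Put $\mathcal{T}=\mathcal{L}=\{0,\ldots,p-1\}$, take the (infinite) state set $\mathcal{S}=\bigcup_{k\ge 0}\{0,\ldots,p-1\}^{k}$ of finite strings over the alphabet, let $s_0$ be the empty string, and define
\begin{align*}
S\bigl(t,(a_0,\ldots,a_{k-1})\bigr) &= (a_0,\ldots,a_{k-1},t),\\
L\bigl(t,(a_0,\ldots,a_{k-1})\bigr) &= \varphi_k(a_0,\ldots,a_{k-1},t).
\end{align*}
A short induction on $k$ shows that, when the input word $(t_0,t_1,\ldots)$ is fed in, the state reached at step $k$ is exactly $(t_0,\ldots,t_{k-1})$, so the output symbol emitted then is $\varphi_k(t_0,\ldots,t_k)$. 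Summing $p^k$ times this symbol over $k\ge 0$ recovers $f$, hence $f_{\mathfrak{A}_{s_0}}=f$.

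\emph{Where the difficulty lies.} The construction is mechanical; the one point worth flagging is that an \emph{arbitrary} 1-Lipschitz $f$ forces the state space to be infinite, because the state has to record (at least implicitly) how many symbols have been read so far so that the correct coordinate function $\varphi_k$ can be applied at step $k$. The model in the excerpt explicitly allows $\mathcal{S}$ to be infinite, so no restriction is lost. If one insisted on finite $\mathcal{S}$, one would recover only the ``automatic'' subclass of 1-Lipschitz maps and the converse would fail; that is the obstacle that would have to be addressed in any refinement of the theorem.
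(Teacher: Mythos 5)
Your argument is correct: the forward direction is the standard causality observation (the state $s_k$, and hence the $k$-th output digit, depends only on $t_0,\ldots,t_k$, which is exactly the digit-wise characterization of 1-Lipschitz maps quoted from Proposition 3.33 of \cite{ANKH}), and the converse via the coordinate functions $\varphi_k$ with the prefix-string state space is a valid construction, with the paper's allowance of an infinite $\mathcal{S}$ correctly invoked. Note that the paper itself gives no proof here -- the theorem is imported verbatim from \cite{An_avt_2} -- so there is nothing to compare against; your sketch is essentially Anashin's original argument, and your closing remark that finiteness of $\mathcal{S}$ would cut the class down to the ``automatic'' 1-Lipschitz functions is accurate and consistent with \cite{An_avt_1}.
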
 

We assume that all induced functions $f_{\mathfrak{A}_{s_0}, n}$ are bijective on the set of words of length $n,\;n\ge 1$ (i.e., by the notation \ref {f_mod_p_k}, $f_{\mathfrak{A}_{s_0}, n}=f_{\mathfrak{A}_{s_0}}\pmod {p^n}$ is bijective on $\Z/p^k\Z$). Then, by Theorem \ref {cr_mer_Anashin} 1-Lipschitz function $f_{\mathfrak{A}_{s_0}}$ preserves the measure. In this case, by Proposition 4.26 from \cite {ANKH} the function  $f_{\mathfrak{A}_{s_0}}$ is bijective on  $\Z_p$.

We consider the words in the alphabet $\mathcal{T}=\{0,1,\ldots,p-1\}$ as a plain text, and the words in the alphabet $\mathcal{L}=\{0,1,\ldots,p-1\}$ as a ciphertext, then the family of bijective on $\Z_p$ functions $f_{\mathfrak{A}_{s_0}},$ $s_0\in \mathcal{S}$ defines a cipher, where the initial state $s_0$ is an encryption key.

\begin{remark}
\label {rem_cipher_not_all}
Not all codes are implemented in the framework of this model of encryption. For example, the RSA cryptosystem or block ciphers can not be implemented in our encryption model. This is due to the fact that these ciphers use alphabets where the number of symbols is not a prime.
\end{remark}

Thus, a family of bijective on $\Z_p$ transformations  $f_{\mathfrak{A}_{s_0}},$ $s_0\in \mathcal{S}$, each of which implements an automaton $\mathfrak{A}$, form a cipher  $\mathfrak{C}$. On $\Z_p$ we set an operation "$g$" as the function  $g : \Z_p\times\Z_p \to \Z_p$. We say that the cipher $\mathfrak{C}$ is a homomorphic cipher with respect to the operation "$g$" if for every $s_0\in \mathcal{S}$
\[
f_{\mathfrak{A}_{s_0}}(g(x,y))=g(f_{\mathfrak{A}_{s_0}}(x),f_{\mathfrak{A}_{s_0}}(y)),\;\;x,y\in\Z_p.
\]

Suppose that operations "$g_1$" and "$g_2$" (it is possible to consider more than two operations) are defined on $\Z_p$, that is  $g_1 : \Z_p\times\Z_p \to \Z_p,$ $g_2 : \Z_p\times\Z_p \to \Z_p$. We call a cipher $\mathfrak{C}$ a fully homomorphic with respect to the operations "$g_1$" and "$g_2$", if
\begin{align*}
f_{\mathfrak{A}_{s_0}}(g_1(x,y))&=g_1(f_{\mathfrak{A}_{s_0}}(x),f_{\mathfrak{A}_{s_0}}(y)),\;\;x,y\in\Z_p;\\
f_{\mathfrak{A}_{s_0}}(g_2(x,y))&=g_2(f_{\mathfrak{A}_{s_0}}(x),f_{\mathfrak{A}_{s_0}}(y)),\;\;x,y\in\Z_p.
\end{align*}

Thus, the problem of describing homomorphic (fully homomorphic) ciphers in the p-adic model of encryption is as follows.

Find all functions $f : \Z_p\to \Z_p$ such that
\begin{enumerate}
    \item $f$ is 1-Lipschitz function (i.e., $f$ is realized by some automaton);
    \item $f$ preserves the measure (ensures the uniqueness of decryption);
    \item $f$ defines a homomorphism with respect to a given set of operations on $\Z_p$ (one operation is for a homomorphic cipher, two operations are for a fully homomorphic cipher).
\end{enumerate}


\section {Homomorphic ciphers}
\label {sec_hom_cipher}
In this section we give a description of 1-Lipschitz functions $f:\Z_p\>\Z_p$, which preserve the measure and define the homomorphism relative to a binary operation on $\Z_p$, i.e., to some function $g : \Z_p\times \Z_p \to \Z_p$. In this section we shall consider only the following operations:

\begin{enumerate}
    \item ordinary arithmetical operations "$+$" and "$\cdot$" on $\Z_p;$ 
    \item coordinate-wise logical operations "$\mathrm{XOR}$" and "$\mathrm{AND}$". 
These operations are defined on $\Z_p$ as follows. Let $p$-adic numbers $x, y \in \Z_p$ be defined in the canonical form, i.e.,
$$
x=x_0+x_1p+\ldots+x_kp^k+\ldots;\;\;\;\;\;y=y_0+y_1p+\ldots+y_kp^k+\ldots,
$$
where $x_i, y_i\in \{0,1,\ldots,p-1\},$ $i\ge 0$. 
Then the canonical representations of the $p$-adic numbers $x\mathrm{XOR}y$ and $x\mathrm{AND}y$ have the forms

\begin{multline*}
x\mathrm{XOR}y=((x_0+y_0)\bmod p)+((x_1+y_1)\bmod p)p+\ldots\\
+((x_k+y_k)\bmod p)p^k+\ldots;
\end{multline*}

\begin{multline*}
x\mathrm{AND}y=((x_0\cdot y_0)\bmod p)+((x_1\cdot y_1)\bmod p)p+\ldots\\
+((x_k\cdot y_k)\bmod p)p^k+\ldots;
\end{multline*}
\end{enumerate}

Note that the functions that define the homomorphisms with respect to arithmetic operations "$+$" and "$\cdot$"  on the $p$-adic analogue of the field of complex numbers were considered in \cite{Schikhof}. In contrast to this case, we consider the functions that preserve the measure and define the homomorphism on $\Z_p$ for a wider set of binary operations. Full description of measure-preserving, 1-Lipschitz functions, which define homomorphisms for specific operations on $\Z_p,$ are presented in Theorem \ref{t_gom_ariff} (for arithmetic operations) and Theorem \ref{t_gom_coord} (for logical operations). In other words, these theorems provide a description of all  homomorphic ciphers with respect to each of the operations "$+$", "$\cdot$", "$\mathrm{XOR}$" and "$\mathrm{AND}$". Note that functions that equal identically zero or one are not considered in these theorems.

\begin {theorem}[Arithmetic operations]
\label {t_gom_ariff}
Let $f:\Z_p\to \Z_p$ be a 1-Lipschitz function, which differs from zero and unit functions. 

Then
\begin{enumerate}
    \item $f$ defines a homomorphism with respect to the operation "$+$", i.e., \\
$f(x+y)=f(x)+f(y),\; x,y\in \Z_p$ if and only if \\
$f(x)=Ax,\; A\in \Z_p,\; A\ne 0$. 

Such a function preserves the measure if and only if $A\not \equiv 0 \pmod p$;
    \item $f$ defines a homomorphism with respect to the operation "$\cdot$", i.e., \\
$f(x\cdot y)=f(x)\cdot f(y),\; x,y\in \Z_p$ if and only if 

\begin{equation}
\label {fun_umnz}
    f(x)=\begin{cases}
p^k A^k (t_0^s(\bmod p))\cdot(1+p\;t)^a, &\text{if} \;\; x=p^k(t_0+t p),\\
0,                                               &\text{if}\;\; x=0
\end{cases}
\end{equation}
where $t_0\not \equiv 0\pmod p,$ $t\in \Z_p,$ $a, A \in \Z_p,$ $s\in \{1,\ldots,p-1\},$ $k\ge 0.$
 
Such a function preserves the measure if and only if \\
$A\not \equiv 0\pmod p,$ $\text{GCD}\;(s, p-1)=1,$ $a\not \equiv 0\pmod p$.
\end{enumerate} 
\end {theorem}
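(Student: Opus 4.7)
My plan is to exploit the classical fact that a continuous additive endomorphism of $\Z_p$ is multiplication by a fixed $p$-adic integer. Setting $A := f(1)$, a straightforward induction gives $f(n)=An$ for every $n\in\No$. Since 1-Lipschitz implies continuous and $\No$ is dense in $\Z_p$ under the $p$-adic topology, this identity extends to $f(x)=Ax$ on all of $\Z_p$. Excluding the zero function amounts to $A\ne 0$. For measure preservation, I would apply Theorem \ref{cr_mer_Anashin}: the reduction $Ax\bmod p^k$ is bijective on $\Z/p^k\Z$ for every $k\ge 1$ if and only if $A$ is a unit modulo $p$, that is $A\not\equiv 0\pmod p$.

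\textbf{Multiplicative case, structural decomposition.} First I would extract $f(0)$ and $f(1)$ from the idempotent relations $f(0)=f(0)^2$ and $f(1)=f(1)^2$; both values must lie in $\{0,1\}$, and a short case analysis using multiplicativity shows that any ``wrong'' choice collapses $f$ to a constant function, which is excluded, leaving $f(0)=0$ and $f(1)=1$. For a nonzero $x\in\Z_p$ I would write $x=p^k u$ with $u\in\Z_p$ a unit, so multiplicativity reduces matters to determining $f(p)^k$ and $f(u)$. The 1-Lipschitz property forces $f(p)\equiv f(0)\equiv 0\pmod p$, so $f(p)=pA$ with $A\in\Z_p$, which produces the prefactor $p^kA^k$. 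To analyse $f(u)$ I would use the canonical splitting of the unit group $u=t_0(1+pt')$ with $t_0\in\{1,\dots,p-1\}$ and $t'\in\Z_p$, corresponding to $(\Z/p\Z)^{\ast}\times(1+p\Z_p)$. On the cyclic torsion factor, reduction modulo $p$ of $f$ is a group endomorphism of a cyclic group of order $p-1$; Fermat's little theorem excludes a trivial kernel, so this endomorphism must have the form $t_0\mapsto t_0^s\bmod p$ for some $s\in\{1,\dots,p-1\}$. On the pro-$p$ factor $1+p\Z_p$, which is topologically isomorphic to $(\Z_p,+)$ (via the $p$-adic logarithm for odd $p$, with a small modification for $p=2$), any continuous multiplicative endomorphism is raising to a fixed exponent $a\in\Z_p$, interpreted through the $p$-adically convergent binomial series. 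Assembling the three factors reproduces formula \ref{fun_umnz}; the converse direction, that every such $f$ is 1-Lipschitz and multiplicative, is a routine verification once convergence of $(1+pt)^a$ is in hand.

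\textbf{Measure preservation for the multiplicative case.} I would verify the bijectivity conditions layer by layer via Theorem \ref{cr_mer_Anashin}. The requirement $A\not\equiv 0\pmod p$ guarantees the correct $p$-adic valuation of $f(x)$ and prevents collapse modulo $p^n$; $\gcd(s,p-1)=1$ is exactly the condition for $t_0\mapsto t_0^s$ to permute $(\Z/p\Z)^{\ast}$; and $a\not\equiv 0\pmod p$ is precisely the condition for raising to the $a$-th power to act bijectively on each quotient $(1+p\Z_p)/(1+p^{k+1}\Z_p)$. The main obstacle I anticipate is cleanly disentangling the contributions of the three multiplicative ``layers'' ($p^{\mathbb{Z}_{\ge 0}}$, the cyclic torsion, and the principal units) while keeping the 1-Lipschitz constraint uniform across them; a secondary subtlety is the case $p=2$, where the principal-unit group carries an extra $\{\pm 1\}$ factor and needs a slightly modified decomposition before the binomial-exponent argument applies.
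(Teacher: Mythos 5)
Your argument for the additive case and for the structural characterization in the multiplicative case coincides with the paper's: the same induction-plus-density step gives $f(x)=Ax$, and the same three-layer decomposition $x=p^{k}t_{0}(1+pt)$ is used, with the torsion factor treated as an endomorphism of the cyclic group $(\Z/p\Z)^{*}$ (hence a power map $t_{0}\mapsto t_{0}^{s}$) and the principal units handled by transporting multiplication to addition --- the paper does this explicitly via $\mathrm{LN}_p$ and $\mathrm{EXP}_p$, which is your binomial-exponent step in disguise. Where you genuinely diverge is the measure-preservation criterion in the multiplicative case: the paper computes the van der Put coefficients of \ref{fun_umnz} and applies Theorem \ref{cr_mer_vdp}, whereas you apply Theorem \ref{cr_mer_Anashin} and check bijectivity modulo $p^{n}$ layer by layer, reading off $A\not\equiv 0\pmod p$, $\gcd(s,p-1)=1$ and $a\not\equiv 0\pmod p$ as bijectivity of multiplication by the unit $A^{k}$, of the power map on $(\Z/p\Z)^{*}$, and of the power map on the $p$-group $(1+p\Z_p)/(1+p^{n}\Z_p)$, respectively. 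Your route avoids the coefficient computation and makes the three conditions conceptually transparent, though you should still spell out how the layers assemble into a bijection of $\Z/p^{n}\Z$: the elements of fixed valuation $k<n$ form the block $p^{k}(\Z/p^{n-k}\Z)^{*}$, on which the three factors act independently, and bijectivity must hold blockwise. Finally, you are right to flag $p=2$: the paper's use of $\mathrm{EXP}_p\colon p\Z_p\to 1+p\Z_p$ silently assumes $p$ odd, since $\mathrm{EXP}_2$ converges only on $4\Z_2$ and $\Z_2^{*}\cong\{\pm1\}\times(1+4\Z_2)$; neither you nor the paper carries out the modified argument there, so this is an acknowledged loose end shared with the original rather than a defect specific to your proposal.
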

     
\begin {proof} Let $f$ define a homomorphism with respect to the operation "$+$". Then $f(m)=m\cdot f(1),$ $m\in \Z.$ Let $f(1)=A\in \Z_p,$ $A\ne 0$. Since 1-Lipschitz function $f$ is continuous on $\Z_p$ and $\Z$ is dense in $\Z_p$, then  $f(x)=A\cdot x,$ $x\in \Z_p$. The function $f(x)=A\cdot x$  preserves the measure if and only if $A\not\equiv 0\pmod p$ (see, for example, Lemma 4.41 in \cite {ANKH}). It is clear that the function $f(x)=Ax$ defines a homomorphism with respect to addition on $\Z_p.$

Let us prove the second statement of the theorem. Let $f$ defines a homomorphism with respect to multiplication on $\Z_p$. Since $f$ is distinct from the identity and zero functions (i.e., there are $x,y\in\Z_p$ such that $f(x)\ne 1$ and $f(y)\ne 0$), and $f$ is the homomorphism, then $f(0)=0.$ Moreover, since $f$ is the homomorphism, then $f(1)=1.$ As $f$ is a 1-Lipschitz function, then $f(p)\equiv 0\pmod p$ and $f(1+pt)\equiv 1\pmod p$. 1-Lipschitz function $f$ induces a mapping  $f_0: \Z/p\Z\to \Z/p\Z,$ $z\mapsto f(z)\pmod {p^k}$. Because $f$ is the homomorphism, then $f_0$ is also  homomorphic  on $\Z/p\Z$ with respect to multiplication. The function $f_0$ is a non-zero function (otherwise, $f(1+pt)\equiv f_0(1)\equiv 0 \pmod p$). Then there exists $s\in \{0,1,\ldots,p-2\}$ such that $f_0(z)=z^s$. Let $f(p)=pA,\; A\in \Z_p$. 

Since any $x\in \Z_p, x\ne 0$ can be represented in the form 
$$
x=p^rt_0(1+pt),\; t\in \Z_p,\; t_0\in \{1,2,\ldots,p-1\},\; r\ge 0,
$$
then the function $f$, which defines a homomorphism with respect to multiplication on $\Z_p,$ can be represented in the form
\[
f(x)=\begin{cases}
p^k\cdot A^k\cdot (t_0^s(\bmod p))\cdot\phi(1+pt), &\text{if} \;\; x=p^k(t_0+t\cdot p),\; k\ge 0,\\
0,                                                 &\text{if}\;\; x=0
\end{cases}
\]
where $t_0\not \equiv 0\pmod p,$ $t\in \Z_p,$ $a,A \in \Z_p,$ $s\in \{0,1,\ldots,p-2\}$ (here we consider the function $t_0^s(\bmod p)$ as a mapping of the set $\{1,2,\ldots ,p-1\}$ into itself), and 1-Lipschitz function $\phi : 1+p\Z_p \to 1+p\Z_p$ define a homomorphism with respect to multiplication on $1+p\Z_p$.

Let us find the representation of the function $\phi$. Let $\mathrm{EXP}_p : p\Z_p\to 1+p\Z_p$ be the $p$-adic exponent and $\mathrm {LN}_p : 1+p\Z_p\to p\Z_p$ be the $p$-adic logarithm. We consider the function $g : p\Z_p\to p\Z_p$ such that $g(\tau)=\mathrm {LN}_p(\phi(\mathrm {EXP} _p(\tau)))$. Then, the function $g$ defines a homomorphism with respect to addition on $p\Z_p$:
\begin{multline*}
g(\tau_1+\tau_2)=\mathrm {LN}_p(\phi(\mathrm {EXP}_p(\tau_1+\tau_2)))=\\
=\mathrm {LN}_p(\phi(\mathrm {EXP}_p(\tau_1)\cdot\mathrm {EXP}_p(\tau_2)))=\\
=\mathrm {LN}_p(\phi(\mathrm {EXP}_p(\tau_1))\cdot\phi(\mathrm {EXP}_p(\tau_2)))=\\
=\mathrm {LN}_p(\phi(\mathrm {EXP}_p(\tau_1)))+\mathrm {LN}_p(\phi(\mathrm {EXP}_p(\tau_2)))=\\
=g(\tau_1)+g(\tau_2).
\end{multline*}

Therefore, there exists $a\in \Z_p$ such that $g(\tau)=a\tau$. Since $\mathrm {EXP}_p(\mathrm {LN}_p(1+pz))=1+pz,$ $z\in \Z_p$, then 
$$
\mathrm {EXP}_p(g(\tau))=\mathrm {EXP}_p(a\cdot\tau)=\mathrm {EXP}_p(\tau)^a=\phi(\mathrm {EXP}_p(\tau)).
$$

Let $x=1+pt=\mathrm {EXP}_p(\tau),$ $\tau\in p\Z_p$. Then $\phi(x)=x^a,$ $a\in \Z_p$. 

Thus, the function $f$ can be represented in the form 
$$
f(x)=f(p^kt_0(1+pt))=p^k\cdot A^k\cdot (t_0^s(\bmod p))\cdot(1+pt)^a.
$$

Performing the corresponding calculations, we see that the function of this type defines a homomorphism on $\Z_p$ with respect to multiplication.

Let us find the values $A,a\in \Z_p,$ $s\in \{1,2,\ldots,p-1\}$, where the function $f$ of the form
\ref{fun_umnz} preserves the measure. For this we use the criterion of Theorem \ref{cr_mer_vdp}. Let us find the value of the van der Put coefficients of the function $f$. Let $t\in \{0,1,\ldots,p^r-1\},$ $t_0\in \{1,\ldots,p-1\},$ $h\in \{1,2\ldots,p-1\},$ $k\ge 0$. Then $B_0=f(0)=0$ and 
\begin{multline*}
b_{t_0p^k+tp^{k+1}+p^{k+r}h}=\frac {1}{p^{k+r}}B_{t_0p^k+tp^{k+1}+p^{k+r}h}\equiv \\
\equiv \frac {1}{p^{k+r}}\left(f(t_0p^k+tp^{k+1}+p^{k+r}h)-f(t_0p^k+tp^{k+1})\right)\equiv \\
\equiv a\cdot A^k\cdot(t_0^s(\bmod p))\cdot h\pmod p,\; r\ge 1,
\end{multline*}
\[
b_{t_0p^k}=\frac {1}{p^{k}}B_{t_0p^k}\equiv \frac {1}{p^{k}} f(t_0p^k)\equiv A^k\cdot(t_0^s(\bmod p)) \pmod p,\; r\ge 0.
\]

Since $t_0^s\not \equiv 0(\bmod p)$, then $\{b_{t_0p^k+tp^{k+1}+p^{k+r}h}\;:\; h=1,2,\ldots,p-1\}$ coincides with the set of all non-zero residues modulo $p$ if and only if $a\not\equiv 0\pmod p,$ and $A\not\equiv 0\pmod p$. The set $\{b_{t_0p^k}\;:\; t_0=1,2,\ldots,p-1\}$ coincides with the set of all non-zero residues modulo $p$ as $\text{GCD}\;(s, p-1)=1$. Since $f(t_0+pt)\equiv t_0^s \pmod p,$ $t_0\in \{1,\ldots,p-1\}$ and $f(0)=0$, then $f(\bmod p)$ is bijective on $\Z/p\Z$. Then, by Theorem \ref{cr_mer_vdp} the function $f$ preserves the measure if and only if 
$$a\not \equiv 0 \pmod p;\;\;\;A\not \equiv 0 \pmod p;\;\;\; \text{GCD}\;(s, p-1)=1.$$
\end {proof} 

\begin {remark}
If in \ref{fun_umnz}  we set $a=n,$ $s=n,$ $A=p^{n-1}$ for some $n\in \N$, then $f(x)=x^n.$ That is, all such polynomials define a homomorphism with respect to multiplication on $\Z_p$. Functions of the form $f(x)=x^n$ for $n\ge 1$ do not preserve the measure.   
\end {remark}

\begin {theorem}[Logical operations]
\label {t_gom_coord}

Let $f:\Z_p\to \Z_p$ be a 1-Lipschitz function defined in the coordinate form, i.e.,
\begin{equation*}
f(x)=f(x_0+\ldots+p^kx_k+\ldots)=\sum_{k=0}^{\infty}p^k\varphi_k(x_0,\ldots,x_k),
\end{equation*}
where $\varphi_k(x_0,\ldots,x_k)$ are $p$-valued logical functions.
Then 
\begin{enumerate}
   \item $f$ defines a homomorphism with respect to the operation "$\mathrm{XOR}$" if and only if 
\[
\varphi_k(x_0,\ldots,x_k)=\alpha_{0}^{(k)}x_0+\alpha_{1}^{(k)}x_1+\ldots+\alpha_{k}^{(k)}x_k,
\]
where $\alpha_{i}^{(k)}\in\{0,\ldots,p-1\},$ $0\le i\le k,$ $k\ge 0$.

Such functions preserve the measure if and only if 
$\alpha_{k}^{(k)}\not \equiv 0 \pmod p,$ $k\ge 0;$
   \item $f$ defines a homomorphism with respect to the operation "$\mathrm{AND}$" if and only if  
\[
\varphi_k(x_0,\ldots,x_k)=x_0^{s_{0}^{(k)}}\cdot x_1^{s_{1}^{(k)}}\cdots x_k^{s_{k}^{(k)}},
\]
where $s_{i}^{(k)}\in\{1,\ldots,p-1\},$ $0\le i\le k,$ $k\ge 0$.

Such functions preserve the measure if and only if 
\[
f(x)=f(x_0+px_1+\ldots+p^k x_k+\ldots)=\sum_{k=0}^{\infty}p^k(x_k^{s_{k}^{(k)}}(\bmod p)),
\]
where $\text{GCD}\;(s_{k}^{(k)}, p-1)=1,$ $k\ge 0.$
\end{enumerate} 
\end {theorem}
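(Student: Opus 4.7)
The plan is to translate each homomorphism condition into an algebraic condition on the coordinate functions $\varphi_k$, classify the solutions, and then apply Theorem \ref{cr_mer_coord} to pin down the measure-preserving subfamily.

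For the XOR case I would start from the observation that the $k$-th digit of $x\,\mathrm{XOR}\,y$ is $(x_k+y_k)\bmod p$. Comparing $k$-th digits on both sides of $f(x\,\mathrm{XOR}\,y)=f(x)\,\mathrm{XOR}\,f(y)$ via the coordinate representation \ref{coor2} gives
\[
\varphi_k(x_0+y_0,\ldots,x_k+y_k)\equiv \varphi_k(x_0,\ldots,x_k)+\varphi_k(y_0,\ldots,y_k)\pmod{p}.
\]
Thus each $\varphi_k\colon (\Z/p\Z)^{k+1}\to\Z/p\Z$ is a group homomorphism with respect to componentwise addition, i.e.\ a $\Z/p\Z$-linear form; this forces $\varphi_k(x_0,\ldots,x_k)=\sum_{i=0}^{k}\alpha_i^{(k)}x_i$, and the converse is a direct check. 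For measure preservation, Theorem \ref{cr_mer_coord} requires each sub-function $\varphi_{k,[a]_k}(x_k)=\bigl(\sum_{i<k}\alpha_i^{(k)}a_i\bigr)+\alpha_k^{(k)}x_k$ to be a bijection of $\{0,\ldots,p-1\}$; being an affine map on $\Z/p\Z$, this happens iff $\alpha_k^{(k)}\not\equiv 0\pmod p$.

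For the AND case the same coordinate-wise comparison yields
\[
\varphi_k(x_0 y_0,\ldots,x_k y_k)\equiv \varphi_k(x_0,\ldots,x_k)\,\varphi_k(y_0,\ldots,y_k)\pmod{p},
\]
so each $\varphi_k$ is a monoid homomorphism from $(\Z/p\Z)^{k+1}$ to $\Z/p\Z$ with respect to componentwise multiplication. Writing $\varphi_k(x_0,\ldots,x_k)=\prod_{i=0}^{k}\psi_i^{(k)}(x_i)$ with $\psi_i^{(k)}(x)=\varphi_k(1,\ldots,1,x,1,\ldots,1)$ reduces the classification to multiplicative self-maps $\psi\colon\Z/p\Z\to\Z/p\Z$. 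Since $\psi(0)$ must be idempotent and $(\Z/p\Z)^{*}$ is cyclic of order $p-1$, each such $\psi$ is either the zero map, the constant $1$, or a power map $x\mapsto x^{s}$ with $s\in\{1,\ldots,p-1\}$. Discarding the identically-zero and identically-one factors (as announced before the theorem) leaves $\varphi_k(x_0,\ldots,x_k)=x_0^{s_0^{(k)}}\cdots x_k^{s_k^{(k)}}$ with $s_i^{(k)}\in\{1,\ldots,p-1\}$.

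The main obstacle is the measure-preservation half of the AND case. Fixing $x_0=a_0,\ldots,x_{k-1}=a_{k-1}$ produces
\[
\varphi_{k,[a]_k}(x_k)=a_0^{s_0^{(k)}}\cdots a_{k-1}^{s_{k-1}^{(k)}}\cdot x_k^{s_k^{(k)}},
\]
which by Theorem \ref{cr_mer_coord} must be a bijection of $\{0,\ldots,p-1\}$ for \emph{every} choice of $a$. Choosing any $a_i=0$ with $i<k$ collapses this sub-function to the zero function in $x_k$, which is not bijective. Hence the only way to survive the criterion is to take $\psi_i^{(k)}$ for $i<k$ to be the constant-$1$ homomorphism, so that $\varphi_k$ depends only on $x_k$ and reduces to $\varphi_k(x_0,\ldots,x_k)=x_k^{s_k^{(k)}}$. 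Finally, $x_k\mapsto x_k^{s_k^{(k)}}$ sends $0$ to $0$ and is a bijection of $\{0,\ldots,p-1\}$ iff it induces a bijection of the cyclic group $(\Z/p\Z)^{*}$ of order $p-1$, equivalently $\gcd(s_k^{(k)},p-1)=1$. Combining these observations yields the claimed form $f(x)=\sum_{k=0}^{\infty}p^{k}\bigl(x_k^{s_k^{(k)}}\bmod p\bigr)$ and completes the proof.
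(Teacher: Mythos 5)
Your proposal is correct and follows essentially the same route as the paper: both reduce the homomorphism condition to componentwise additivity (resp.\ multiplicativity) of each $\varphi_k$ modulo $p$, decompose $\varphi_k$ into single-variable maps by substituting the neutral element $0$ (resp.\ $1$) in the other coordinates, classify the resulting endomorphisms of $\Z/p\Z$ as linear forms (resp.\ power maps), and then apply Theorem \ref{cr_mer_coord} to the sub-functions $\varphi_{k,[a]_k}$. Your observation that setting some $a_i=0$ kills bijectivity unless the factors with $i<k$ are trivially constant $1$ is exactly the paper's condition $a_0^{s_0^{(k)}}\cdots a_{k-1}^{s_{k-1}^{(k)}}\equiv 1\pmod p$ for all $a$, so no gap remains.
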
 
 
\begin {proof} Let $f$ define a homomorphism with respect to the operation "$\mathrm{XOR}$" on $\Z_p$, i.e., $\varphi_k(x_0+y_0,\ldots,x_k+y_k)=\varphi_k(x_0,\ldots,x_k)+\varphi_k(y_0,\ldots,y_k),$ $x_i,y_j\in\{0,1,\ldots,p-1\},$ $k\ge 0.$

Let 
\[
\varphi_{k,r}(x)=\varphi_{k}(\underbrace {0,\ldots,0}_{r}, x,0,\ldots,0),\;0\le  r\le k.
\]
Since $\varphi_{k,r}(x+y)=\varphi_{k,r}(x)+\varphi_{k,r}(y),$ $x,y\in \Z/p\Z$, then $\varphi_{k,r}(x)$ is the homomorphism on $\Z/p\Z$ with respect to addition. Therefore, $\varphi_{k,r}(x)=a_r^{(k)}x,$ $a_r\in \Z/p\Z$.
Since
\[
\varphi_k(x_0,\ldots,x_k)=\varphi_{k,0}(x_0)+\ldots+\varphi_{k,k}(x_k),
\]
then $\varphi_k(x_0,\ldots,x_k)=a_0^{(k)}x_0+a_1^{(k)}x_1+\ldots+a_k^{(k)}x_k,$ $k\ge 0,$ where $a_i^{(j)}\in \{0,\ldots,p-1\}.$

It is clear that a function represented by the  coordinate functions  defines a homomorphism on $\Z_p$ with respect to the operation "$\mathrm{XOR}$".

Let 
\[
f(x)=f(x_0+x_1p+\ldots)=\sum_{k=0}^{\infty} ([a_0^{(k)}x_0+a_1^{(k)}x_1+\ldots+a_k^{(k)}x_k ](\bmod p))p^k.
\]
Here, by the notation $(*)(\bmod p)$, we emphasize the fact that $\varphi_k$ is defined by the operations on $\Z/p\Z$. 
Coordinate sub-functions of the function $f$ from the representation \ref{coor3} have the form $c+a_k^{(k)}x_k,$ $c\in \{0,\ldots,p-1\}$. These sub-functions are bijective on $\Z/p\Z$ as $a_k^{(k)}\not\equiv 0\pmod p$. Thus, by Theorem \ref{cr_mer_coord} the function $f$ preserves the measure as soon as $a_k^{(k)}\not\equiv 0\pmod p,$ $k\ge 0$.

Let us prove the second statement of the theorem. Let $f$ be a homomorphism with respect to the operation "$\mathrm{AND}$" on $\Z_p$, i.e., 
\begin{multline*}
\varphi_k(x_0\cdot y_0,\ldots,x_k\cdot y_k)=\\
=\varphi_k(x_0,\ldots,x_k)\cdot \varphi_k(y_0,\ldots,y_k),\; x_i,y_j\in\{0,1,\ldots,p-1\},\; k\ge 0.
\end{multline*}

Let 
\[
\varphi_{k,r}(x)=\varphi_{k}(\underbrace {1,\ldots,1}_{r}, x,1,\ldots,1),\;0\le  r\le k.
\]
Since $\varphi_{k,r}(x\cdot y)=\varphi_{k,r}(x)\cdot\varphi_{k,r}(y),$ $x,y\in \Z/p\Z$, then $\varphi_{k,r}(x)$ is the homomorphism on $\Z/p\Z$ with respect to multiplication. Therefore, $\varphi_{k,r}(x)=x^{s_r^{(k)}},$ $s_r\in \{0,1,\ldots,p-2\}$.
Since
\[
\varphi_k(x_0,\ldots,x_k)=\varphi_{k,0}(x_0)\cdots\varphi_{k,k}(x_k),
\]
then $\varphi_k(x_0,\ldots,x_k)=x_0^{s_0^{(k)}}\cdot x_1^{s_1^{(k)}}\cdots x_k^{s_k^{(k)}},$ $k\ge 0$ ($\varphi_k$ is given through the operations on $\Z/p\Z$).

It is clear that a function represented by the  coordinate functions  defines a homomorphism on $\Z_p$ with respect to the operation  "$\mathrm{AND}$".

Let 
\[
f(x)=f(x_0+x_1p+\ldots)=\sum_{k=0}^{\infty} ([x_0^{s_0^{(k)}}\cdot x_1^{s_1^{(k)}}\cdots x_k^{s_k^{(k)}}](\bmod p))p^k.
\]
Here $\varphi_k$ is also defined by the operations on $\Z/p\Z$. 

Coordinate sub-functions of the function $f$ from the representation \ref{coor3} have the form $a_0^{s_0^{(k)}}\cdot a_1^{s_1^{(k)}}\cdots a_{k-1}^{s_{k-1}^{(k)}}\cdot x_k^{s_k^{(k)}}, a_{i}^{s_{i}}\in \{0,\ldots,p-1\},$ $0\le i\le k-1$. These sub-functions are bijective on $\Z/p\Z$ as soon as $a_0^{s_0^{(k)}}\cdot a_1^{s_1^{(k)}}\cdots a_{k-1}^{s_{k-1}^{(k)}}\equiv 1\pmod p$ for any $a_{i}\in \{0,\ldots,p-1\},$ $0\le i\le k-1$ (this is equivalent to $s_0^{(k)}\equiv s_1^{(k)}\equiv\ldots \equiv  s_{k-1}^{(k)}\equiv 0 \pmod p$) and $\text{GCD}\;(s_k^{(k)}, p-1)=1$. 

Thus, by Theorem \ref{cr_mer_coord} the function $f$ preserves the measure as soon as $s_0^{(k)}\equiv s_1^{(k)}\equiv\ldots \equiv s_{k-1}^{(k)}\equiv 0 \pmod p$ and $\text{GCD}\;(s_k^{(k)}, p-1)=1,\;k\ge 0$.
\end {proof}


\section {Fully homomorphic ciphers}
\label {sec_full_hom_cipher}
In this section we describe fully homomorphic ciphers with respect to each pair of operations from the set $\{``+", ``\cdot", ``\mathrm{XOR}", ``\mathrm{AND}"\}$. As we show in Proposition \ref{Prop_full_hom}, there are no such fully homomorphic ciphers.

Therefore, we consider the following problem. Suppose there exists a homomorphic cipher with respect to the operation  "$*$". We consider  "$+$" on $\Z_p$ as such an operation (these ciphers were described in Theorem \ref{t_gom_ariff}). We want to find a new operation "$G$" such that this cipher defines a homomorphism with respect to new operation "$G$". All possible new operations for a homomorphic cipher with respect to "$+$" are described in Proposition \ref{new_operation}. As a method of setting a new operation, we consider convergent in the $p$-adic metric power series in two variables.

Let $\mathcal{H}(*)$ be the set of all non-zero and non-trivial 1-Lipschitz functions, which define a homomorphism with respect to the operation "$*$" on $\Z_p$ and preserve the measure. For arithmetic and logical coordinate-wise operations, the sets 
\begin{equation}
\label {clas_hom}
    \mathcal{H}(+),\; \mathcal{H}(\cdot),\; \mathcal{H}(\mathrm{XOR}),\; \mathcal{H}(\mathrm{AND})
\end{equation}
are described in Theorems \ref{t_gom_ariff} and \ref{t_gom_coord}. 

The set of functions, consisting of identical function $f(x)=x$, is denoted by  $I.$

\begin {proposition} In the notation \ref{clas_hom} the following relations hold:
\label {Prop_full_hom}
\begin{enumerate}
    \item $\mathcal{H}(+)\cap\mathcal{H}(\cdot)=I;$
    \item $\mathcal{H}(+)\cap\mathcal{H}(\mathrm{XOR})=I;$
    \item $\mathcal{H}(+)\cap\mathcal{H}(\mathrm{AND})=I;$
    \item $\mathcal{H}(\cdot)\cap\mathcal{H}(\mathrm{XOR})=I;$
    \item $\mathcal{H}(\cdot)\cap\mathcal{H}(\mathrm{AND})=I;$
    \item $\mathcal{H}(\mathrm{XOR})\cap\mathcal{H}(\mathrm{AND})=I;$
\end{enumerate}
\end {proposition}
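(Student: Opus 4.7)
The plan is to treat the six parts separately, in each case using the parametric descriptions from Theorems~\ref{t_gom_ariff} and~\ref{t_gom_coord} to force the defining parameters to those of the identity $f(x)=x$.

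For parts (1), (2), (3), every $f\in\mathcal{H}(+)$ has the form $f(x)=Ax$ with $A\not\equiv 0\pmod p$, so only $A=1$ remains to be shown. Part~(1) is immediate since $f\in\mathcal{H}(\cdot)$ forces $f(1)=1$. For part~(3), the measure-preserving AND-formula $f(x)=\sum_k p^k(x_k^{s_k^{(k)}}\bmod p)$ evaluated at $x=p^k$ gives $f(p^k)=p^k$, and comparing with $Ap^k$ yields $A=1$. Part~(2) is more delicate: the XOR-description requires every digit $\delta_j(f(x))$ to be $\mathbb{F}_p$-linear in $(x_0,\dots,x_j)$, so for $f(x)=Ax$ the actual $p$-adic product must agree digit-by-digit modulo $p$ with the carry-free convolution, for \emph{every} $x\in\Z_p$. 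I would apply this to $x=-1$ (whose $p$-adic digits are all $p-1$): the convolution gives $-S_j\bmod p$ at position $j$ with $S_j=\sum_{i\le j}a_i$, while the actual digits of $-A$ are $p-a_0$ at position $0$ and $p-1-a_j$ at positions $j\ge 1$. Matching forces $S_j\equiv 1\pmod p$ for every $j$, which combined with $a_0\ne 0$ and monotonicity of $S_j$ gives $a_0=1$ and $a_j=0$ for $j\ge 1$, i.e.\ $A=1$.

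For parts (4) and (5), every $f\in\mathcal{H}(\cdot)$ has the form $f(x)=p^kA^k(t_0^s\bmod p)(1+pt)^a$ on $x=p^k t_0(1+pt)$, and one must force $A=a=s=1$. For part~(5) the measure-preserving AND-formula yields $f(p^k)=p^k$ (so $A=1$) and $f(1+p)=1+p$ (so $a=1$); comparing the digit $1$ of $f(t_0+p)$ on the two sides (which is $t_0^{s-1}\bmod p$ from the multiplicative form and $1$ from the AND form) forces $t_0^{s-1}\equiv 1\pmod p$ for all $t_0\in\{2,\dots,p-1\}$, hence $s=1$. Part~(4) is the main obstacle since both classes are richly parameterized. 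Equating the digit $0$ of $f(t_0)$ from the multiplicative form ($t_0^s\bmod p$) and from the linear XOR form ($\alpha_0^{(0)}t_0\bmod p$) for $t_0\in\{1,\dots,p-1\}$ forces $\alpha_0^{(0)}=1$ and, by Fermat's little theorem, $s=1$. Next, the key observation that $1\,\mathrm{XOR}\,y=1+y$ for $y\in p\Z_p$ combined with the multiplicative form yields the identity
\[
(1+y)^a\;=\;1+f(y)\qquad(y\in p\Z_p).
\]
Applying this at $y=p$ and $y=p^2$ (using $f(p)=pA$ and $f(p^2)=p^2A^2$) and imposing $A^2=(A)^2$ leads to the single equation $(1+p)^{2a}-2(1+p)^a+2=(1+p^2)^a$ in $\Z_p$. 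This has $a=0$ and $a=1$ among its zeros, and Strassman's theorem applied to its Mahler expansion (whose $k=2$ coefficient attains the maximal $p$-adic norm) shows these are the only zeros in $\Z_p$; measure-preservation excludes $a=0$, so $a=1$, $A=1$, and $f=\mathrm{id}$.

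Finally, for part~(6), the measure-preserving AND-formula gives $\varphi_k=x_k^{s_k^{(k)}}$, depending only on $x_k$, while the XOR-formula gives $\varphi_k\equiv\sum_{i\le k}\alpha_i^{(k)}x_i\pmod p$. The former forces $\alpha_i^{(k)}=0$ for $i<k$, and the equation $\alpha_k^{(k)}x_k\equiv x_k^{s_k^{(k)}}\pmod p$ on all of $\{0,\dots,p-1\}$ combined with Fermat's little theorem forces $s_k^{(k)}=\alpha_k^{(k)}=1$, whence $f(x)=\sum_k p^k x_k=x$.
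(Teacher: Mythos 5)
Your proof is correct, and its overall architecture --- six separate intersections, each settled by forcing the parameters in the classifications of Theorems \ref{t_gom_ariff} and \ref{t_gom_coord} to their identity values --- is the same as the paper's; parts (1), (3), (5) and (6) essentially coincide with the published arguments. You genuinely diverge in (2) and (4). In (2) the paper works from the single relation $A\,\mathrm{XOR}\,(p-1)A=0$, obtains the two candidates $A=1$ and $A=\tfrac{1}{1-p}$, and then must separately exclude the second by exhibiting a carry in its coordinate representation; your evaluation of the linear coordinate forms at $x=-1$ eliminates everything but $A=1$ in one pass, which is cleaner --- just record explicitly that $\alpha_i^{(j)}=\delta_j(Ap^i)=\delta_{j-i}(A)$, so the linear form really must be the carry-free convolution. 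In (4) the paper exploits the full identity $1+pA(1+pt)^a=(1+p+p^2t)^a$ for all $t$ and differentiates in $t$, getting $A=a=1$ almost immediately; you instead specialize to $y=p,p^2$ and count roots of $(1+p)^{2a}-2(1+p)^a+2-(1+p^2)^a$ by Strassman. Your coefficient claim is verifiable for odd $p$: writing the function as a power series in $a$ via $\mathrm{EXP}_p$ and $\mathrm{LN}_p$, the coefficients of $a$ and $a^2$ are $-\mathrm{LN}_p(1+p^2)$ and $\mathrm{LN}_p(1+p)^2-\tfrac{1}{2}\mathrm{LN}_p(1+p^2)^2$, both of norm $p^{-2}$, while all higher coefficients have strictly smaller norm, so there are at most two roots and hence exactly $a\in\{0,1\}$; this is more computational than the derivative trick but equally valid and avoids appealing to differentiability in $t$. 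Both your Strassman step and the paper's differentiation deserve a one-line caveat at $p=2$, where $3^a$ is analytic in $a$ only on each residue class of $a$ modulo $2$ separately; splitting into $a$ even and $a$ odd, the same estimates leave a unique root in each branch and the conclusion is unchanged.
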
 

\begin {proof} 
Let us prove the first statement of the theorem. Let $f(x)\in \mathcal{H}(+)\cap\mathcal{H}(\cdot)$. As $f(x)\in\mathcal{H}(\cdot)$, then $f(1)=1$. Thus, $f(1)=A=1$, that is, $f(x)=x.$

Let us prove the second statement. Let $f(x)\in \mathcal{H}(+)\cap\mathcal{H}(\mathrm{XOR})$. As $f\in \mathcal{H}(+)$, then $f(x)=Ax,$ $A\in \Z_p$. From $f(x\mathrm{XOR}y)=f(x)\mathrm{XOR}f(y)$ it follows that $A(x\mathrm{XOR}y)=Ax\mathrm{XOR}Ay$. Suppose $x=1$ and $y=p-1$. Then 
\begin{equation}
\label {f_xor}
    A\mathrm{XOR}(p-1)A=0.
\end{equation}

It is clear that this equation holds for $A=1$. Let $A=A_0+A_1p+\ldots,\;A_k\in \{0,\ldots,p-1\}$. Note that
\begin {multline*}
(p-1)A=(p-1)(A_0+A_1p+\ldots)=\\
=[p-A_0]+p[A_0-1+p-A_1]+p^2[A_1-1+p-A_2]+\ldots.
\end {multline*}

We find the values $A_0, A_1, \ldots$, using the equality \ref{f_xor}. We obtain that $p-A_0+A_0\equiv 0 \pmod p$;  $A_0-1+p-A_1+A_1\equiv 0 \pmod p$, i.e., $A_0=1$; $A_1-1+p-A_2+A_2\equiv 0 \pmod p$, so $A_1=1$ and so on. As a result, we find that \ref{f_xor} also holds for $A=1+p+p^2+\ldots=\frac {1}{1-p}$. In this case 
\begin{equation}
\label {f_1_xor}
    f(x)=Ax=A(x_0+px_1+\ldots)=\sum_{k=0}^{\infty}(x_0+x_1+\ldots+x_k)p^k.
\end{equation}

We represent this function in the coordinate form \ref{coor2}. Here,  we denote the operation of addition on $\Z/p\Z$ by "$\oplus$".
Note that relation \ref{f_1_xor} for the function $f$ implies that  $\varphi_0=x_0,$ $\varphi_1(x_0, x_1)=x_0\oplus x_1,$ $x_0+x_1=x_0\oplus x_1+p\delta (x_0,x_1)$, where $\delta (x_0,x_1)=\frac {x_0+x_1-x_0\oplus x_1}{p}$. Since $f\in \mathcal{H}(\mathrm{XOR})$, then   $c_0, c_1, c_2 \in \{0,\ldots,p-1\}$ should exist such that $\varphi_2=c_0x_0\oplus c_1x_1\oplus c_2x_2=x_0\oplus x_1\oplus x_2+\delta (x_0,x_1)$. Since $\delta (0,0)=0$, then $c_2=1$. Suppose $x_2=0,$ $x_1=0$ and $x_2=0,$ $x_0=0$, then $c_0=1,$ $c_1=1$. Then $\frac {x_0+x_1-x_0\oplus x_1}{p}=0$ for $x_2=0$ and $x_0+x_1=x_0\oplus x_1$. 

This contradiction shows that this type of coordinate representation of the function $f$ from \ref{f_1_xor} does not coincide with the representation of the second statement of the Theorem \ref{t_gom_coord} (or $f\not \in \mathcal{H}(\mathrm{XOR})$). Then $f(x)=x$ and $\mathcal{H}(+)\cap\mathcal{H}(\mathrm{XOR})=I$. 

Let us prove statement 3.  Suppose that $f(x)\in \mathcal{H}(+)\cap\mathcal{H}(\mathrm{AND})$, then we obtain $f(p^k)=p^k=Ap^k$. Then $A=1$ and $\mathcal{H}(+)\cap\mathcal{H}(\mathrm{AND})=I$.  

Let us prove statement 4. Let $f\in\mathcal{H}(\cdot)\cap\mathcal{H}(\mathrm{XOR}).$ As $1+p+p^2t=1\mathrm{XOR}(p+p^2t)$, then
\begin {multline}
\label {f_2_xor}
1+pA(1+pt)^a=1\mathrm{XOR}pA(1+pt)=\\
=1\mathrm{XOR}f(p(1+pt))=f(1\mathrm{XOR}(p+p^2t))=\\
=f(1+p+p^2t)=(1+p+p^2t)^a
\end {multline}
Set $t=0$ in \ref{f_2_xor} and differentiate functions from \ref{f_2_xor}, then we get 
\begin{equation}
\label {f_3_xor} 
1+pA=(1+p)^a\;\; \text {and} \;\; A=\left(\frac {1}{1+pt}+p\right)^{a-1}.
\end{equation}

Set $t=0$ in \ref {f_3_xor}, we get $A=(1+p)^{a-1}$ and $1+pA=(1+p)A$. Thus $a=A=1.$ 

Using the representation of the second statement of the Theorem \ref{t_gom_ariff} and the second statement of \ref{t_gom_coord} for the function $f$ for $t\in t_0+p\Z_p,$ $t_0\ne 0,$ we get 
\[
\alpha_{0}^{(0)}t_0\equiv f(t)\equiv t_0^s\pmod p.
\]
Then $\alpha_{0}^{(0)}=1,$ $s=1$ and the function $f$ can be represented in the form 
\[
f(x)=\begin{cases}
p^kt,&\text{if} \;\;x=p^kt,\; t\not \equiv 0 \pmod p,\;k>0;\\
t,   &\text{if} \;\;x=t,\; t\not \equiv 0 \pmod p,\;k=0;\\
0,   &\text{if} \;\;x=0.
\end{cases}
\]
Thus, $f(x)=x$ and $\in \mathcal{H}(+)\cap\mathcal{H}(\mathrm{AND})=I.$

Let us prove statement 5. Let $f\in\mathcal{H}(\cdot)\cap\mathcal{H}(\mathrm{AND}).$ Let $x=x_0+px_1+\ldots+x_kp^k+\ldots,$ $x_i\in\{0,\ldots,p-1\}$. Taking into account the type of representation of functions from $\mathcal{H}(\cdot)$ and $\mathcal{H}(\mathrm{AND})$, we obtain that
$$
p^kx_k^{s_k}=f(x_kp^k)=p^kA^k(x_k^s(\bmod p)).
$$
Then, $s_k=s,$ $k\ge 0,$ $A=1$. Taking into account the representation of $f$ from Theorem \ref{t_gom_ariff} and Theorem \ref{t_gom_coord}, we get
\begin {multline*}
(1+pt)^a=f(1+pt)=f(1+pt_1+\ldots)=\\
=1+p(t_1^s(\bmod p))+\ldots+p^k(t_k^s(\bmod p))+\ldots.
\end {multline*}

Let us find the values of the van der Put coefficients for each representation of the function $f$. Let $1+pt+p^kh=1+pt_1+\ldots+p^{k-1}t_{k-1}+p^kh,$ $h\not\equiv 0 \pmod p$, then
\begin {multline*}
B_{1+pt+p^kh}= (1+pt+p^kh)^a-(1+pt)^a = ah(1+pt)^{a-1}p^k=\\
=(1+\ldots+p^{k-1}(t_{k-1}^s(\bmod p))+p^k(h^s(\bmod p)))-\\
-(1+\ldots+p^{k-1}(t_{k-1}^s(\bmod p)))=h^s(\bmod p)p^k.
\end {multline*}
As $ah(1+pt)^{a-1}p^k=h^s(\bmod p),$ then $a=1,$ $s=1$. Thus,   
\[
f(x)=\begin{cases}
p^kt,&\text{if} \;\;x=p^kt,\; t\not \equiv 0 \pmod p,\;k>0;\\
t,   &\text{if} \;\;x=t,\; t\not \equiv 0 \pmod p,\;k=0;\\
0,   &\text{if} \;\;x=0,
\end{cases}
\]
That is $f(x)=x$ and $\mathcal{H}(\cdot)\cap\mathcal{H}(\mathrm{AND})=I.$

Let us prove statement 6. Let $f\in\mathcal{H}(\mathrm{XOR})\cap\mathcal{H}(\mathrm{AND}).$ Using the coordinate representation of the function  $f$ (Theorem \ref {t_gom_coord}), we obtain
\[
\alpha_{0}^{(k)}x_0+\alpha_{1}^{(k)}x_1+\ldots+\alpha_{k}^{(k)}x_k=x_k^{s_k^{(k)}},\;\; k\ge 0.
\] 
Then, $\alpha_{0}^{(k)}x_0+\alpha_{1}^{(k)}x_1+\ldots+\alpha_{k-1}^{(k)}x_{k-1}$ is a constant as soon as 
\[
\alpha_{0}^{(k)}=\ldots=\alpha_{k-1}^{(k)}=0,\;\; \alpha_{k}^{(k)}=1,\;\;s_k^{(k)}=1.
\]
Thus, $f(x)=x$ and $\mathcal{H}(\mathrm{XOR})\cap\mathcal{H}(\mathrm{AND})=I.$
\end {proof}
Proposition \ref{Prop_full_hom} shows that there are no fully homomorphic ciphers for all pairs of arithmetic and coordinate-wise logical operations. However, we may reformulate the problem of finding fully homomorphic ciphers as follows. Let "$*$" be an operation on $\Z_p,$ and $f : \Z_p\to \Z_p$ be a family of 1-Lipschitz functions that preserve the measure. Then we define a homomorphism with respect to this operation. We find  operation "$G$" on $\Z_p$ such that each function $f$ defines a homomorphism with respect to this operation, and that  "$G$" differs from "$*$", i.e., $f(G(x,y))=G(f(x),f(y)),$ $x,y\in \Z_p$. As a result, we find that the function $f$ defines a homomorphism with respect to both operations "$*$" and "$G$", i.e., the family of functions $f$ can be used as a fully homomorphic cipher. As new operation "$G$", we consider an arbitrary mapping $G:\Z_p\times \Z_p \to \Z_p$. Of course, such an operation may not be associative, commutative, etc.

In Proposition \ref{new_operation} we give a description of all operations "$G$" (defined by power series), where linear functions $f(x)=Ax,$ $A\in \Z_p$ define the homomorphism with respect to the operation "$G$". In other words, $f$ is the homomorphism with respect to addition and to new operation "$G$".

We set "$G$" (a function $G:\Z_p\times \Z_p \to \Z_p$) as the convergent power series in $\Z_p\times \Z_p$ (it is sufficient to require that the general term of the series converges to zero in the $p$-adic metric).

The function $G(x,y)$ is given by the convergent power series
\begin{equation}
    \label {op_ful_hom_G}
G(x,y)=c+ax+by+\sum_{k=1}^{\infty}\sum_{i+j=n_k}^{n_k} c_{i,j}x^iy^j,\;\; c_{i,j}, a,b,c\in \Z_p,
\end{equation}
where for any $n_k\in\{n_1, n_2,\ldots \;|\; n_k\in \N,\; 1<n_1<n_2<\ldots\}=\mathcal{N}_G$ there exists $0 \le  i,j \le n_k$ such that $c_{i,j}\ne 0$, and if $n\not \in \mathcal{N}_G$, then $c_{i,j}=0$ for any $0 \le  i,j \le n_k,$ $i+j=n$.

\begin {proposition}
\label {new_operation}
Let $f : \Z_p \to \Z_p,$ $f(x)=Ax,$ $A\in \Z_p,$ $A\ne 0$. The function $f$ defines a homomorphism  with respect to the operation "$G$", given as the series \ref {op_ful_hom_G}  if and only if for $\mathcal{N}_G\ne \emptyset:$
\begin{enumerate}
    \item $c=0;$
    \item $n_k=dq_k+1,\;k\ge 1,$ where $d=\text {GCD} (n_1-1,n_2-1,\ldots,n_k-1,\ldots)$, $q_k\in \N;$
    \item $A^d=1$  
\end{enumerate}
and $G=ax+by$ for any $A\ne 0$ for $\mathcal{N}_G = \emptyset.$
\end {proposition}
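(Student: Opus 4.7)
The plan is to convert the homomorphism requirement $f(G(x,y))=G(f(x),f(y))$ with $f(x)=Ax$ into a formal power series identity in $\Z_p[[x,y]]$ and then compare coefficients of each monomial. Direct substitution gives $A\cdot G(x,y)=G(Ax,Ay)$, which in terms of the series \ref{op_ful_hom_G} reads
\begin{equation*}
Ac+Aax+Aby+\sum_{k\ge 1}\sum_{i+j=n_k}A\,c_{i,j}\,x^iy^j=c+Aax+Aby+\sum_{k\ge 1}A^{n_k}\sum_{i+j=n_k}c_{i,j}\,x^iy^j.
\end{equation*}
The linear terms match automatically, leaving $a,b$ free; the constant-term comparison gives $(A-1)c=0$; and for each $n_k\in\mathcal{N}_G$ every coefficient with $i+j=n_k$ satisfies $c_{i,j}\cdot A\,(1-A^{n_k-1})=0$. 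Since $\Z_p$ is an integral domain, $A\ne 0$, and by definition of $\mathcal{N}_G$ at least one $c_{i,j}$ with $i+j=n_k$ is nonzero, this forces $A^{n_k-1}=1$ for every $k\ge 1$, and (for $A\ne 1$) also $c=0$, which is condition 1.

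Next I would show that the family of equalities $\{A^{n_k-1}=1 : k\ge 1\}$ is equivalent to condition 3, $A^d=1$, where $d=\mathrm{GCD}(n_1-1,n_2-1,\ldots)$; condition 2 is then merely the tautology $d\mid n_k-1$ from the definition of the GCD. The implication $A^d=1\Rightarrow A^{n_k-1}=(A^d)^{q_k}=1$ is immediate. Conversely, Bezout in $\Z$ gives a finite representation $d=\sum_i\lambda_i(n_{k_i}-1)$ with $\lambda_i\in\Z$, whence $A^d=\prod_i(A^{n_{k_i}-1})^{\lambda_i}=1$; finiteness of the representation is guaranteed since the GCD of an infinite family of positive integers is attained on a finite subfamily. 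For the case $\mathcal{N}_G=\emptyset$ the series reduces to $G(x,y)=c+ax+by$; the only remaining constraint is $(A-1)c=0$, yielding $c=0$ for $A\ne 1$, and $G=ax+by$ is plainly a homomorphism for $f(x)=Ax$ for every $A\in\Z_p$, $A\ne 0$. Substitution of conditions 1--3 back into the series identity then verifies sufficiency in the main case.

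The main obstacle I expect is a careful treatment of the first reduction step: passing from equality of the two convergent power series as functions on $\Z_p\times\Z_p$ to equality of their formal coefficients. This can be handled by fixing one variable and invoking uniqueness of the one-variable $p$-adic power series expansion in the other, applied iteratively, using that all series in sight converge because their general terms tend to zero in the $p$-adic metric. Once this reduction is in place, the remainder is essentially an exercise in linear algebra over the integral domain $\Z_p$ combined with a one-line Bezout argument in $\Z$ that collapses the infinitely many equations $A^{n_k-1}=1$ into the single condition $A^d=1$.
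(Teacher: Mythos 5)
Your proposal is correct and follows essentially the same route as the paper's proof: substitute $f(x)=Ax$ into $f(G(x,y))=G(f(x),f(y))$, compare coefficients to obtain $(A-1)c=0$ and the system $A^{n_k-1}=1$, and collapse that system to $A^d=1$ via the GCD. You simply supply the details the paper leaves implicit (the justification for passing from functional equality to coefficient equality, the Bezout argument, and the observation that $c=0$ really requires $A\ne 1$ rather than $A\ne 0$), which makes your write-up somewhat more careful than the original.
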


\begin {proof} 
Let $f=Ax$ define a homomorphism with respect to the operation "$G$", i.e., $AG(x,y)=G(Ax,Ay)$ and $\mathcal{N}_G\ne \emptyset$. Using the representation \ref{op_ful_hom_G}, we get that  $A$ satisfies the system of equations 
$$
A^{n_1}=A,\;A^{n_2}=A,\;\ldots,\; A^{n_k}=A,\; \ldots
$$
or $A^{n_k-1}=1,$ $k\ge 1.$ This system of equations is equivalent to the equation $A^d=1$, where $d=\text {GCD} (n_1-1,n_2-1,\ldots,n_k-1,\ldots)$.  
Let $x=0,$ $y=0,$ then $Ac=c$ and $c=0$ (since $A\ne 0$). It is easy to see that under the conditions of the proposition, the function $f(x)=Ax$  defines homomorphism with respect to the operation "$G$".
If $\mathcal{N}_G=\emptyset$, then $G(x,y)=c+ax+ay$ and $c=0.$  The function $f(x)=Ax$ defines the homomorphism with respect to $G=ax+by$ for any $A$.
\end {proof}

\begin {remark}
Here we recall known facts about the number of solutions of the equation $A^d=1$ in $\Z_p$. If $p\nmid d$, then the equation $A^d=1$ has $\text {GCD} (d,p-1)$ solutions in $\Z_p$ (see, for example, Theorem 3.24 in \cite {ANKH}). If $d=p^k$, then the equation $A^{p^k}=1$ has a unique solution $A=1,$ except when $p=2$ and $k=1$ (in this case, the equation $A^2=1$ in $\Z_2$ has solutions $A=1,$ $A=-1;$ see, for example, Theorem 18.9 in \cite {Katok}). Clearly, for $d=p^kn,$ $p\nmid d$ the equation $A^d=1$ has $\text {GCD} (n,p-1)$ solutions in $\Z_p$.

Thus, for $\mathcal{N}_G\ne \emptyset,$  maximum number of functions of the form $f(x)=Ax$, which define the homomorphism with respect to the operation "$G$", equals $p-1.$ If the functions $f(x)=Ax$ are chosen for the construction of fully homomorphic ciphers with respect to operations "$+$" and "$G$", the prime number $p$ must be large. 
\end {remark}

\begin{example}
Here are some examples of operations "$G$", for which the functions $f(x)=Ax$ define the homomorphism with respect to the operations "$+$" and "$G$" for a suitable choice of $A$:

\begin{enumerate}
    \item $G(x,y)=ax+by,$ $a,b\in \Z_p$ for any $A\in \Z_p,$ $A\ne 0;$
    \item $G_1(x,y)=xy^{p-1}$ or $G_2(x,y)=x^{p-1}y+xy^{p-1}$ for $A\in\Z_p,$ $A^{p-1}=1$. The operation $G_2$ is commutative;
    \item for $p\ne 2,$ then $G(x,y)=x^{\frac{p-1}{2}}\cdot y^{\frac{p-1}{2}},$ $A^{p-1}=1;$
    \item $G(x,y)=\sum_{s=0}^{\infty}p^s\left(x^{(p-1)s+1}+y^{(p-1)s+1}\right)=\frac {x}{1-px^{p-1}}+\frac {y}{1-py^{p-1}},$ and $A^{p-1}=1$. 
\end{enumerate}
\end{example}


\section {Application}
\label {sec_Appl}
In Proposition \ref{Prop_full_hom} we have shown that a fully homomorphic cipher does not exist (within the p-adic model of encryption) for each pair of operations from the set $\{``+", ``\cdot", ``\mathrm{XOR}", ``\mathrm{AND}"\}$. Suppose that, using the pairs of operations from this set, we can write an equation that can be implemented by a computer program. From this perspective, the result of Proposition \ref{Prop_full_hom} means that within the $p$-adic model of encryption there are no "universal" (i.e., for any software) fully homomorphic ciphers. However, as was shown in Proposition \ref{new_operation}, it is possible to construct a fully homomorphic cipher with respect to a pair of "new" operations on $\Z_p$. In this section we will discuss such a possibility.

We assume that in the framework of the $p$-adic model of encryption a computer program implements a formula of the original data. This formula is written using the set of operations $g_1:\Z_p\times \Z_p \to \Z_p$ and $g_2:\Z_p\times \Z_p \to \Z_p$. By analogy with the formulas of Boolean algebra, let us define formulas with respect to the operations $g_1$ and $g_2$ over $\Z_p$:
\begin{enumerate}
    \item  variables and operations $g_1,\;g_2$ are formulas;
    \item if $F_1,\;F_2$ are formulas, then $g_1(F_1,F_2),\; g_2(F_1,F_2)$ are formulas.    
\end{enumerate}

We denote the set of all formulas defined with respect to operations $g_1$ and $g_2$ as $[g_1,g_2]$. For instance, $x^3+y+y^2\cdot z$ is the formula from the set $[``+",``\cdot"].$ The following assertion holds.

\begin {proposition}
\label {formuls}

Let operations $g_1:\Z_p\times \Z_p \to \Z_p$ and $g_2:\Z_p\times \Z_p \to \Z_p$ be defined by the formulas from $[``+",``\cdot"].$ Let 1-Lipschitz function $f: \Z_p\to \Z_p$ define a nontrivial homomorphism with respect to the operations $g_1$ and $g_2.$

Then $[g_1,g_2]\subset [``+",``\cdot"]$ and $[g_1,g_2]\ne [``+",``\cdot"].$ 
\end {proposition}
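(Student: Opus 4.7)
The inclusion $[g_1,g_2]\subset[``+",``\cdot"]$ is essentially definitional. Since each of $g_1,g_2$ is itself a polynomial expression in two variables over $\Z_p$, any formula built from them by iterated composition substitutes polynomials into polynomials and hence remains a polynomial; this is made rigorous by an easy induction on the syntactic depth of a formula in $[g_1,g_2]$. So the real content of the proposition is the strict inequality.

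For the strict inequality I would argue by contradiction. Suppose $[g_1,g_2]=[``+",``\cdot"]$. Then in particular the two-variable formulas $x+y$ and $x\cdot y$ lie in $[g_1,g_2]$, i.e.\ each can be written explicitly as a composition of $g_1$'s and $g_2$'s applied to the variables $x,y$. The main auxiliary fact is that a map homomorphic with respect to the generators $g_1,g_2$ is automatically homomorphic with respect to every formula built from them: if $f(g_i(u,v))=g_i(f(u),f(v))$ for $i=1,2$ and all $u,v\in\Z_p$, then for every $F(x_1,\ldots,x_n)\in[g_1,g_2]$ one has $f(F(x_1,\ldots,x_n))=F(f(x_1),\ldots,f(x_n))$. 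This is proved by induction on the construction of $F$: the base case $F=x_j$ is tautological, and the step $F=g_i(F_1,F_2)$ applies the assumed homomorphism property of $f$ at the outermost operation and then the induction hypothesis to each subformula.

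Applying this lemma to the formulas expressing $x+y$ and $x\cdot y$ yields $f(x+y)=f(x)+f(y)$ and $f(x\cdot y)=f(x)\cdot f(y)$ for all $x,y\in\Z_p$, so $f\in\mathcal{H}(+)\cap\mathcal{H}(\cdot)$. By Proposition \ref{Prop_full_hom}(1) this intersection equals $I$, forcing $f$ to be the identity; this contradicts the hypothesis that $f$ defines a nontrivial homomorphism, and the strict inequality follows. I anticipate no substantive obstacle in this plan: the only care required is a clean statement of the induction on formula structure. Should one wish to avoid invoking the measure-preservation assumption built into $\mathcal{H}$, one can equally well conclude $f=x$ directly from Theorem \ref{t_gom_ariff}, since additive homomorphy forces $f(x)=Ax$ with $A\neq 0$ while multiplicative homomorphy forces $f(1)=1$, and together these give $A=1$.
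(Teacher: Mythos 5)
Your proof is correct and follows essentially the same route as the paper's: assume $[g_1,g_2]=[``+",``\cdot"]$, express $x+y$ and $x\cdot y$ as formulas $\Psi$ in $g_1,g_2$, propagate the homomorphism property of $f$ through the formula structure to conclude $f\in\mathcal{H}(+)\cap\mathcal{H}(\cdot)=I$, and contradict nontriviality. Your explicit induction on formula depth and your closing remark (deriving $f(x)=x$ directly from Theorem \ref{t_gom_ariff} without invoking measure preservation) only make explicit what the paper leaves implicit.
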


\begin {proof}
Since $g_1, g_2 \in [``+",``\cdot"],$ then $[g_1,g_2]\subset [``+",``\cdot"].$ Assume that $[g_1,g_2]=[``+",``\cdot"].$ Then "$+$" and "$\cdot$" are defined by the formulas $\Psi_{``+"}(x_1,x_2)$ and $\Psi_{``\cdot"}(x_1,x_2)$ with respect to the operations $g_1$ and $g_2.$ Since $f$ is a homomorphism with respect to $g_1$ and $g_2,$ then  

\begin{eqnarray*}
    f(x_1+x_2)&=&f(\Psi_{``+"}(x_1,x_2))=\Psi_{``+"}(f(x_1),f(x_2))=f(x_1)+f(x_2);\\
    f(x_1\cdot x_2)&=&f(\Psi_{``\cdot"}(x_1,x_2))=\Psi_{``\cdot"}(f(x_1),f(x_2))=f(x_1)\cdot f(x_2);
\end{eqnarray*}
i.e. $f$ is the homomorphism with respect to "$+$" and "$\cdot$". Then from Proposition \ref{Prop_full_hom} it follows that $f$ is an identity mapping. This contradicts the condition of the Proposition.
\end {proof}

Proposition \ref {formuls} shows that if  a fully homomorphic cipher exists with respect to certain operations $g_1$ and $g_2$ (formulas with respect to arithmetic operations  "$+$" and "$\cdot$"), then the cipher is not applicable for all formulas from $[``+",``\cdot"].$ Previously, we  assumed the class $[``+",``\cdot"]$ was "universal" in the sense that any computer program could implement the formula from this class.

In this regard, we propose the following method using a fully homomorphic encryption to secure remote computing within the $p$-adic model of encryption.

Let $W(d_1,\ldots,d_n)$ be a given formula (or program), by which cloud computing will be performed. Here $d_1,\ldots,d_n \in \Z_p$ stand for data. This formula can be given, for example, in the basis of the usual arithmetic operations. Let us find a new pair of operations $g_1$ and $g_2,$  such that:
\begin{enumerate}
    \item the formula $W\in [g_1, g_1];$
    \item fully homomorphic cipher $f_a$ with respect to the operations $g_1$ and $g_2$ exists, where $a$ is the key and $[g_1, g_1]\ne [``+",``\cdot"]$.
 \end{enumerate}
Then $W(f_a(d_1),\ldots,f_a(d_n))=f_a(W(d_1,\ldots,d_n)).$ As a result, we are able to perform cloud computing in secure mode for a given formula (program) $W.$ 

\begin{example}
Let 
$$
W(x,y,z)=x^{p-1}y^{(p-1)^2}z+x^{p-1}y^{p-1}z+x^py^{p(p-1)}\left(1+ y^{p^2-3p+2}\right) 
$$
be a formula by which cloud computing will be performed. 

Let us consider the operation of usual addition "$+$" in $\Z_p$, and the operation "$*$" in $\Z_p$ defined by $a*b=ab^{p-1}.$ Then
$$
W(x,y,z)=z*(x*y)+(z*x)*y+(x*x)*(y*y)+x*((x*y)*y).
$$

From Proposition \ref{Prop_full_hom} it follows that the functions $f:\Z_p\to \Z_p,$ $f(x)=Ax,$ $A\in \Z_p,$ $A\ne 0$ and $A^{p-1}=1$ define the homomorphisms with respect to operations "$+$" and "$*$". In other words, the family of such functions is a fully homomorphic cipher with respect to the given operations. In particular, $f(W(x,y,z))=W(f(x),f(y),f(z)).$
\end{example}

\end{document}